\numberwithin{equation}{section}
\numberwithin{table}{section}
\numberwithin{figure}{section}
\begin{document}

\pdfpageheight=297 true mm
\pdfpagewidth=210 true mm
\pdfhorigin=1 true mm
\pdfvorigin=4 true mm

\title[Actuarial Applications and Estimation of Extended~CreditRisk$^+$]{Actuarial Applications and Estimation of Extended~CreditRisk$^+$} 

\author[J.~Hirz]{Jonas~Hirz}
\address[J.~Hirz]{BELTIOS GmbH, Lehargasse 1, \mbox{1060 Vienna, Austria}}
\email{\href{mailto:jonas.hirz@beltios.com}{jonas.hirz@beltios.com}}

\author[U.~Schmock]{Uwe~Schmock}
\address[U.~Schmock]{Department of Financial and Actuarial Mathematics, TU Wien, Wiedner Hauptstr.~8--10, \mbox{1040 Vienna, Austria}}
\email{\href{mailto:schmock@fam.tuwien.ac.at}{schmock@fam.tuwien.ac.at}}

\author[P.~V.~Shevchenko]{Pavel~V.~Shevchenko}
\address[Pavel~V.~Shevchenko]{Department of Applied Finance and Actuarial Studies, Macquarie University, NSW, \mbox{2109, Australia}}
\email{\href{pavel.shevchenko@mq.edu.au}{pavel.shevchenko@mq.edu.au}}

\thanks{J.~Hirz gratefully acknowledges financial support from the Australian Government via the 2014 Endeavour Research Fellowship, as well as from the Oesterreichische Nationalbank (Anniversary Fund, project number: 14977) and Arithmetica. 
P.~V.~Shevchenko gratefully acknowledges financial support by the CSIRO-Monash
Superannuation Research Cluster, a collaboration among CSIRO, Monash University, Griffith University, the University of Western Australia, the University of Warwick, and stakeholders of the retirement system in the interest of better outcomes for all.}

\date{\today}

\renewcommand{\subjclassname}{\textup{2\,000} Mathematics Subject
     Classification}

\begin{abstract} 
We introduce an additive stochastic mortality model which allows joint modelling and forecasting of underlying death causes.
Parameter families for mortality trends can be chosen freely. As model settings become high dimensional,  Markov chain Monte Carlo (MCMC) is used for parameter estimation. We then link our proposed model to an extended version of the credit risk model CreditRisk$^+$. This allows exact risk aggregation via an efficient numerically stable Panjer recursion algorithm and provides numerous applications in credit, life insurance and annuity portfolios to derive P\&L distributions.
	Furthermore, the model allows exact (without Monte Carlo simulation error) calculation of
	risk measures and their sensitivities with respect to model parameters for P\&L distributions
	such as value-at-risk and expected shortfall. Numerous examples, including an application to partial internal models under Solvency II, using Austrian and Australian data are shown.
\end{abstract}
	
\keywords{stochastic mortality model; extended CreditRisk$^+$; risk aggregation; partial internal model; mortality risk; longevity risk; Markov chain Monte Carlo}

\maketitle

\tableofcontents

\section{Introduction}
	As the current low interest rate environment forces insurers to put more focus on biometric risks, proper stochastic modelling of mortality has become increasingly important. New regulatory requirements such as Solvency II\footnote{{\href{https://eiopa.europa.eu/regulation-supervision/insurance/solvency-ii}{https://eiopa.europa.eu/regulation-supervision/insurance/solvency-ii}},
 accessed on March 28, 2017.}
 allow the use of internal stochastic models which provide a more risk-sensitive evaluation of
capital requirements and the ability to derive accurate profit and loss (P\&L) attributions with respect to different sources of risk. Benefits for companies which make use of actuarial
tools such as internal models depend crucially on the accuracy of predicted death probabilities and the ability to extract different sources of risk. So far, insurers often use deterministic modelling techniques and then add artificial risk margins to account for risks associated with longevity, size of the portfolio, selection phenomena, estimation and various other sources.
	Such approaches often lack a stochastic foundation and are certainly not consistently appropriate
	for all companies.

	Deriving P\&L distributions of large credit, life and pension portfolios
	typically is a very challenging task.
	In applications, Monte Carlo is the most commonly used approach as it is easy to implement for all different kinds of
	stochastic settings. However, it has shortcomings in finesse and speed, especially for calculation of model sensitivities.
	Motivated by numerical trials, we found that the credit risk model \emph{extended CreditRisk$^+$} (ECRP), as introduced in  \mbox{\cite[section~6]{schmock}}, \mbox{is an exceptionally} efficient as well as
flexible alternative to Monte Carlo and, simultaneously, fits~into life actuarial settings as well.	Coming from credit risk, this model allows flexible handling of	dependence structures within a portfolio via common stochastic risk factors.
 The ECRP model relies on Panjer recursion (cf. \cite{sundt}) which, unlike Monte Carlo, does not require simulation. It~allows an~efficient implementation to derive P\&L distributions exactly given input data and chosen granularity
	associated with discretisation. The speed up for deriving sensitivities, i.e.,~derivatives, of risk measures with respect to model parameters using Panjer recursion will even be an order of magnitude larger as it is extremely difficult to calculate them via finite differencing using Monte Carlo.
	In addition, our~proposed approach
	can enhance pricing of retirement income products and can be used for applications to partial internal models
	in the underwriting risk module.
		
	In Section \ref{stoch_modelling} we introduce an additive stochastic mortality model which is related to classical approaches such as the model introduced in
	\cite{Lee-Carter} or models discussed in ~\cite{cairns2009}. It allows joint modelling of underlying stochastic death causes based on Poisson assumptions where dependence is introduced via common stochastic risk factors.
	Note that forecasting of death causes in a disaggregated way can lead to problems with dominating causes in the long run, as argued in  \cite{wilmoth_1995} as well as  \cite{booth_tickle_2008}. However, joint modelling of death causes can yield computational issues due to high dimensionality which is why the literature is sparse in this matter. An extensive literature review and a multinomial logistic model for joint modelling of death causes is studied in~\cite{Alai_2015}.
	
	Given suitable mortality data, in Section \ref{sec:estimation} we provide several methods to estimate model parameters including matching of
	moments, a maximum a posteriori approach and maximum likelihood as well as Markov chain Monte Carlo (MCMC). Death and population data
	are usually freely available on governmental websites or at statistic bureaus.
	Due to the high dimensionality of our problem, we suggest the use of MCMC which is one of the few statistical approaches allowing joint parameter estimation in high-dimensions. MCMC can become time-consuming, in particular for settings with common stochastic risk factors. However, estimation of model parameters does usually not have to be done on a frequent basis.
	We propose a parameter family for mortality trends which makes our model a generalisation of the Lee--Carter approach. However, our approach
	allows the use of any other kind of parameter family as MCMC is very flexible.
		
		In Section \ref{applications_mortality} we estimate model parameters for Australian death
	data in a setting with 362 model parameters, where trends, trend acceleration/reduction and cohort effects are estimated. Further applications include forecasting of central mortality rates and expected future life time.
	
	In Section \ref{modelling} we then introduce the ECRP model, see  \cite[section 6]{schmock}, which~is a collective risk model corresponding one-to-one to our proposed stochastic mortality model.
	As~the name suggests, it is a credit risk model used to derive loss distributions of credit portfolios
	and originates from the classical CreditRisk$^+$ model which was introduced by~\cite{CRP}.
	Within credit risk models it is classified as a Poisson mixture model.
	Identifying default with death makes the model perfectly applicable for actuarial applications. Extended CreditRisk$^+$ provides a flexible basis for modelling multi-level dependencies and allows a fast and
	numerically stable algorithm for risk aggregation.
	In the ECRP model, deaths are driven by independent stochastic risk~factors.
	The~number of deaths of each policyholder is assumed to be
	Poisson distributed with stochastic intensity.
	Thus, serving as an approximation for the true case with single deaths,
	each person can die multiple times within a period. However, with~proper parameter scaling, approximations are very good and final  loss distributions are accurate
	due to Poisson approximation, as shown in ~\cite{poissonApprox} or~\cite{poisson_approx} as well as the references therein. The close fit of the ECRP model with (mixed) Poisson distributed deaths to more realistic Bernoulli models is outlined in an introductory example. Another great advantage of the ECRP model is that it automatically incorporates many different sources of risks, such as trends, statistical volatility risk and parameter risk.
	
	Section \ref{validation} briefly illustrates validation and model selection techniques.~Model validation approaches are based on previously defined dependence and independence structures.
	All tests suggest that the model suitably fits Australian mortality data.

\section{An Alternative Stochastic Mortality Model}\label{stoch_modelling}

\subsection{Basic Definitions and Notation}
	Following standard actuarial notations and definitions, \cite{pitacco} or ~\cite{cairns2009}, let~$T_{a,g}(t)$ denote the random variable of remaining life
	time of a person aged $a\in\{0,1,\dots, A\}$, with maximum age $A\in \mathbb{N}$, and of gender $g\in\{\textrm{m},\textrm{f}\}$ at time/year $t\in \mathbb{N}$.
	Survival and death probabilities over a time frame $\tau\geq 0$
	are given by ${}_\tau p_{a,g}(t)=\mathbb{P}(T_{a,g}(t)>\tau)$ and
	${}_\tau q_{a,g}(t)=\mathbb{P}(T_{a,g}(t)\leq\tau)$, respectively. For notational purposes we write
	$q_{a,g}(t):={}_1 q_{a,g}(t)$.
	
	Deterministic force of mortality (theory for the stochastic case is also available) at age $a+\tau$ with gender $g$ of a person aged $a$ at time $t$ is given by
	the derivative $\mu_{a+\tau,g}(t):=-\frac{\partial}{\partial \tau}\log {}_\tau p_{a,g}(t)$.
	Henceforth, the central death rate of a person aged $a$ at time $t$ and of gender $g$ is given by
	a weighted average of the force of mortality
	\[
		m_{a,g}(t):=\frac{\int_0^1 {}_s p_{a,g}(t+s)  \mu_{a+s,g}(t+s)\, ds}{\int_0^1 {}_s p_{a,g}(t+s)\, ds}=\frac{q_{a,g}(t)}{\int_0^1 {}_s p_{a,g}(t+s)\, ds}\approx \frac{q_{a,g}(t)}{1-q_{a,g}(t)/2}\,.
	\]

	If $\mu_{a+s,g}(t+s)=\mu_{a,g}(t)$ for all $0\leq s<1$ and $a,t\in\mathbb{N}_0$ with $a\leq A$, i.e.,~under
	piecewise constant force of mortality, we have $m_{a,g}(t)=\mu_{a,g}(t)$ as well as
	$q_{a,g}(t)= 1-\exp(-m_{a,g}(t))$.
	
	Let $N_{a,g}(t)$ denote the number of recorded deaths in year $t$
	of people having age $a$ and gender~$g$, as well as define the exposure to risk $E_{a,g}(t)$
	as the average number of people in year $t$ having age $a$ and gender $g$. The latter
	can often be retrieved from statistical bureaus or approximated by the age-dependent
	population in the middle of a calender year. Estimates for these data in Australia (with~several adjustment components such as census undercount and immigration taken into account) are available at the website of the  {Australian Bureau of Statistics}.
	Considering underlying death causes $k=0,\dots,K$, which are to be understood as diseases or
	injury that initiated the train of morbid events leading directly to death,
	let $N_{a,g,k}(t)$ denote the actual number of recorded deaths due to death cause $k$
	in year $t$ of people having age $a$ and gender $g$. Note that
	$N_{a,g}(t)=N_{a,g,0}(t)+\dots+N_{a,g,K}(t)$. Data on ICD-classified (short for
	International Statistical Classification of Diseases and Related Health Problems)
	death counts can be found for many countries.
	For Australia these data can be found at the
	{Australian Institute of Health and Welfare} (AIHW),
	classified by ICD-9 and ICD-10.
		
	\subsection{Some Classical Stochastic Mortality Models}
	We start with a simple model and assume that deaths in year $t$
	of people having age $a$ and gender $g$ are Poisson distributed
	$N_{a,g}(t)\sim\mathrm{Poisson}(E_{a,g}(t) m_{a,g}(t))$. In this case
	the maximum likelihood estimate for the central death rate is given by
	$\widehat{m}_{a,g}(t)= \widehat N_{a,g}(t)/E_{a,g}(t)$, where $\widehat N_{a,g}(t)$ is the actual recorded number
	of deaths.
	
	The benchmark stochastic mortality model considered in the literature is the traditional Lee--Carter model, \cite{Lee-Carter}, where the logarithmic central death rates are
	modelled in the form
	\[
		\log\widehat{m}_{a,g}(t)=\alpha_{a,g}+\beta_{a,g} \kappa_t+
	\varepsilon_{a,g,t}
	\]
	with independent normal error terms $\varepsilon_{a,g,t}$ with mean 	
	zero, common time-specific component $\kappa_t$, as~well as age and gender specific parameters $\alpha_{a,g}$ and $\beta_{a,g}$. Using suitable
	normalisations, estimates for these parameters and $\kappa_t$ can be derived via the method of moments and singular value decompositions,  ~\cite[section 4.5.1]{AVOE_annuity_table}. Forecasts may then be obtained by applying
auto-regressive models to $\kappa_t$. Note that~\cite{LC_stateSpace2017} and \cite{LC_stateSpace2015} provide joint estimation of parameters and  latent factor $\kappa_t$ in the Lee-Carter model via a state-space framework using MCMC. Various
	extensions of this classical approach with multiple time factors and cohort components have been proposed in the literature; \mbox{for a review}, see~\cite{cairns2009}.

	Instead of modelling central death rates with normal error terms as in the Lee--Carter
	approach, ~\cite{lee-carter1}
	propose to model death counts via Poisson regression where error terms
	are replaced by Poison random variables. In this case
	$N_{a,g}(t)\sim \mathrm{Poisson}(E_{a,g}(t) m_{a,g}(t))$ where, in the simplest case,
	$\log {m}_{a,g}(t)=\alpha_{a,g}+\beta_{a,g} \kappa_t$.
	Correspondingly, assuming that we want to forecast central death rates for different underlying
	death causes $k$, it is natural to assume $N_{a,g,k}(t)\sim\mathrm{Poisson}(E_{a,g}(t) m_{a,g,k}(t))$ where $\log {m}_{a,g,k}(t)=\alpha_{a,g,k}+\beta_{a,g,k} \kappa_{k,t}$.
	However, in this case, it is not difficult to see that
	${m}_{a,g,0}(t)+\dots+{m}_{a,g,K}(t)\neq {m}_{a,g}(t)$, in general, and thus
	\[
		\mathbb{E}[{N_{a,g}(t)}]\neq \sum_{k=0}^K\mathbb{E}[{N_{a,g,k}(t)}]
	\]
	since $N_{a,g}(t)\sim\mathrm{Poisson}\big(E_{a,g}(t)({m}_{a,g,0}(t)+\dots+{m}_{a,g,K}(t))\big)$. Moreover, as central death rates are increasing for selected underlying death cause (e.g.,~central death rates for 75--79 year olds in Australia have doubled from 1987 throughout 2011), forecasts increase exponentially, exceeding one in the~future.
	
	In light of this shortcoming, we will introduce an additive stochastic mortality model which fits into the risk aggregation framework of extended CredtRisk$^+$, see \cite{schmock}.
	
	\subsection{An Additive Stochastic Mortality Model}
	To be able to model different underlying death causes or, more generally, different 	
	covariates which show some common death behaviour (however, we will restrict to the first case in this paper), let~us assume common stochastic risk factors
	$\Lambda_1(t),\dots,\Lambda_K(t)$ with corresponding age-dependent weights
	$w_{a,g,k}(t)$ which give the age-dependent susceptibility to the different risk factors  and which~satisfy
	\[
		w_{a,g,0}(t) + \dots + w_{a,g,K}(t) =1\,.
	\]
	
	\begin{remark}
	Risk factors introduce dependence amongst deaths of different policyholders. If risk factor $\Lambda_k(t)$
		takes large or small values, then the likelihood of death due to $k$ increases or decreases, respectively,
		simultaneously for all policyholders depending on the weight $w_{a,g,k}(t)$. Weights
		$w_{a,g,0},\dots,w_{a,g,K}$
	indicate the vulnerability of people aged $a$ with gender $g$ to risk factors $\Lambda_{1}(t),\dots,\Lambda_{K}(t)$.
	Risk factors are of particular importance to forecast death causes. For a practical example, assume
		that a new, very effective cancer treatment is available such that fewer people die from lung cancer. This situation would have a longevity effect on all policyholders. Such a scenario
		would then correspond to the case when the risk factor for neoplasms shows a small realisation.
	\end{remark}
	
	\begin{definition}[Additive stoch.~mort.~model]\label{def:additiveModel}
		Given risk factors $\Lambda_1(t),\dots,\Lambda_K(t)$ with unit mean and variances $\sigma^2_1(t),\dots,\sigma^2_K(t)$, assume
	\[
		N_{a,g,k}(t)\sim \mathrm{Poisson}\big(E_{a,g}(t) m_{a,g}(t) w_{a,g,k}(t) \Lambda_k(t)\big)\,, \quad k=1,\dots,K\,,
	\]
	being conditionally independent of all $N_{a,g,k'}(t)$ with $k\neq k'$. Idiosyncratic
	deaths $N_{a,g,0}(t)$ with $k=0$ are assumed to be mutually independent and
	independent of all other random variables such that
	\[
		N_{a,g,0}(t)\sim \mathrm{Poisson}\big(E_{a,g}(t) m_{a,g}(t) w_{a,g,0}(t)\big)\,.
	\]
	\end{definition}
	
	In this case, in expectation, deaths due to different underlying death causes add up
	correctly, i.e.,
	\[
		\mathbb{E}[{N_{a,g}(t)}]=E_{a,g}(t) m_{a,g}(t)
		= E_{a,g}(t) m_{a,g}(t) \sum_{k=0}^K w_{a,g,k}(t) = \sum_{k=0}^K \mathbb{E}[{N_{a,g,k}(t)}]
	\]
	as $\mathbb{E}[{\Lambda_k(t)}]=1$ by assumption.
	
	\begin{remark}
	\textls[-15]{In applications, if $K=0$, it is feasible to replace the
	Poisson assumption by a more realistic Binomial assumption $N_{a,g,0}(t)\sim \mathrm{Binomial}\big(E_{a,g}(t), m_{a,g}(t)\big)$, as done in Section \ref{sec:forecaDP} for illustration purposes.}
	\end{remark}
	
	\begin{remark}
		If risk factors are independent and gamma distributed (as in the case of classical CreditRisk$^+$),
		then,~unconditionally, deaths $N_{a,g,k}(t)$ have a negative binomial distribution.
		Then, variance of deaths is given by
		$\var{N_{a,g,k}(t)}=E_{a,g}(t)m_{a,g}(t)  w_{a,g,k}(t)(1+E_{a,g}(t)m_{a,g}(t)  w_{a,g,k}(t)\sigma^2_k(t))$
		with $\sigma^2_k(t)$ denoting the variance of $\Lambda_k(t)$.
		Analogously, for all $a\neq a'$ or $g\neq g'$,
		\begin{equation}\label{covariance_simple}
			\cov(N_{a,g,k}(t),N_{a',g',k}(t))=E_{a,g}(t)\mathbb{E}_{a',g'}(t)m_{a,g}(t) m_{a',g'}(t) w_{a,g,k}(t) w_{a',g',k}(t)\sigma^2_k(t)\,.
		\end{equation}

		This result will be used in Section \ref{validation} for model validation.
	A similar result also holds for the more general model with dependent risk factors, see \cite[section 6.5]{schmock}.
	\end{remark}
	
	To account for improvement in mortality and shifts in death causes over time, we introduce
	the following time-dependent parameter families for trends. Similar to the Lee--Carter model, we could simply consider a linear decrease in log mortality. However, since this yields diminishing or exploding mortality over time, we choose a more sophisticated class with trend reduction features. First, in order to guarantee that values lie in the unit interval, let $F^{\mathrm{Lap}}$ denote the Laplace distribution
	function with mean zero and variance two, i.e.,
	\[
		F^{\mathrm{Lap}}(x)=\frac{1}{2}+\frac{1}{2}\text{sign}(x)\big(1-\exp(-|x|)\big)\,,\quad x\in\mathbb{R}\,,
	\]
	such that, for $x<0$, twice the expression becomes the exponential function.
	
	To ensure that weights and death probabilities are strictly positive for $t\to\infty$, we use the trend reduction/acceleration
	technique
	\begin{equation}\label{CauchyDistr}
		\mathcal{T}^*_{\zeta,\eta}(t)=\frac{1}{\eta}\arctan(\eta(t-\zeta))\,,
	\end{equation}
	\textls[-15]{with parameters $(\zeta,\eta)\in\mathbb{R}\times(0,\infty)$ and $t\in\mathbb{R}$, motivated by~\cite[section 4.6.2]{AVOE_annuity_table}}.
	In particular, (\ref{CauchyDistr}) roughly gives a linear function of $t$ if parameter $\eta$ is small which illustrates the close
	link to the Lee--Carter model. In order to make estimation more stable, we suggest the normalisation $\mathcal{T}_{\zeta,\eta}(t)=(\mathcal{T}^*_{\zeta,\eta}(t)-\mathcal{T}^*_{\zeta,\eta}(t_0))/(\mathcal{T}^*_{\zeta,\eta}(t_0)-\mathcal{T}^*_{\zeta,\eta}(t_0-1))$ with normalisation
	parameter $t_0\in\mathbb{R}$.
	A clear trend reduction in mortality improvements
	can be observed in Japan since 1970, see ~\cite[section 4.2]{DAV}, and also for females in Australia.	
	
	\begin{definition}[Trend families for central death rates and weights]\label{def:families} Central death rates for age $a$, gender $g$ in year $t$ are
	given by
	\begin{equation}\label{eq:PDFamily}
		m_{a,g}(t)=F^{\mathrm{Lap}}\big(\alpha_{a,g} +\beta_{a,g} \mathcal{T}_{\zeta_{a,g},\eta_{a,g}}(t)+\gamma_{t-a}\big)\,,
	\end{equation}
	with parameters \/ $\alpha_{a,g},\beta_{a,g},\zeta_{a,g},\gamma_{t-a}\in\mathbb{R}$
	and\/ $\eta_{i}\in(0,\infty)$, as well as where weights
	are given by
	\begin{equation}\label{eq:WeightFamily}
	w_{a,g,k}(t)= \frac{\exp\big(u_{a,g,k}+v_{a,g,k} \mathcal{T}_{\phi_{k},\psi_{k}}(t)\big)}{\sum_{j=0}^K \exp\big(u_{a,g,j}+v_{a,g,j}\mathcal{T}_{\phi_j,\psi_j}(t)\big)}\,,\quad k\in\{0,\dots,K\}\,,
	\end{equation}
	with parameters \/ $u_{a,g,0},v_{a,g,0},\phi_0,\dots,u_{a,g,K},v_{a,g,K},\phi_K\in\mathbb{R}$ and\/ $\psi_{0},\dots,\psi_{K}\in(0,\infty)$.
	\end{definition}
	
	The assumptions above yield
	an exponential evolution of central death rates over time, modulo trend reduction $\mathcal{T}_{\zeta,\eta}(t)$
	and cohort effects $\gamma_{t-a}$ ($t-a$ refers to the birth year).
	Vector $\alpha$ can be interpreted as intercept parameter for central death rates.
	Henceforth, $\beta$ gives the speed of mortality improvement while $\eta$ gives the speed of trend reduction and $\zeta$
	gives the shift on the S-shaped arctangent curve, i.e.,~the
		location of trend acceleration and trend reduction.
	Parameter $\gamma_{t-a}$ models cohort effects for groups with the same year of birth.
		This factor can also be understood as a categorical variate such as smoker/non-smoker, diabetic/non-diabetic
		or country of residence. The interpretation of model parameters for families of weights is similar.
		
		Cohort effects are not used for modelling weights $w_{a,g,k}(t)$ as sparse data do not allow proper estimation.~In applications, we suggest to fix $\phi$ and $\psi$ in order to reduce dimensionality to suitable~levels.
		Furthermore, fixing trend acceleration/reduction parameters ($\zeta$, $\eta$, $\phi$, $\psi$) yields stable results over time, with similar behavior as in the Lee-Carter model. Including trend reduction parameters can lead to less stable results over time. However, our proposed model allows free adaption of parameter families for mortality and weights.

\begin{remark}[Long-term projections]\label{rem:long-term}
	Long-term projections of death probabilities using (\ref{eq:PDFamily})
	give
	\[
		\lim_{t\to\infty} m_{a,g}(t)=
			F^{\mathrm{Lap}}\Big(\alpha_{a,g} +\beta_{a,g} \frac{\pi}{2 \eta_{a,g}}\Big)\,.
	\]

	\noindent Likewise, long-term projections for weights
	using (\ref{eq:WeightFamily}) are given by
	\[
		\lim_{t\to\infty} w_{a,g,k}(t)=
			\frac{\exp\big(u_{a,g,k} +v_{a,g,k}\frac{\pi}{2 \psi_{k}}\big)}{\sum_{j=0}^K \exp\big(u_{a,g,j} +v_{a,g,j} \frac{\pi}{2 \psi_{j}}\big)}\,.
	\]

	\noindent Thus, given weak trend reduction, i.e.,~$\psi_k$ close to zero,
	weights with the strongest trend will tend to dominate in the long term. If we a priori fix the parameter for trend reduction $\psi_k$ at suitable values, this effect can be controlled. Alternatively, different parameter families for weights can be used, e.g.,~linear families. Note~that our model ensures that weights across risk factors $k=0,1,\dots,K$ always sum up to one which is why overall mortality $m_{a,g}(t)$ is not influenced by weights and their trends.
\end{remark}

\section{Parameter Estimation}\label{sec:estimation}

 In this section we provide several approaches for parameter estimation in our proposed model from Definitions \ref{def:additiveModel} and \ref{def:families}. The approaches include
maximum likelihood, maximum a posteriori, matching of moments and MCMC. Whilst matching of moments estimates are easy to derive but less accurate,
maximum a posterior and maximum likelihood estimates cannot be calculated by deterministic numerical optimisation, in general. Thus, we suggest MCMC as a slower but very powerful alternative. Publicly available data based on
the whole population of a country are used.

~\cite{embrechts} in section 8.6 consider statistical inference for Poisson mixture models and
Bernoulli mixture models. They briefly introduce moment estimators and maximum likelihood estimators
for homogeneous groups in Bernoulli mixture models. Alternatively, they derive statistical inference via
a generalised linear mixed model representation for mixture models which is
distantly related to our setting. In their \lq Notes and Comments\rq~section the reader can find a comprehensive list of
interesting references. Nevertheless, most of their results and arguments are not directly applicable to our case
since we use a different parametrisation and since we usually have rich data of death counts compared to the sparse data on company defaults.

In order to be able to derive statistically sound estimates, we make the following simplifying assumption for time independence:

\begin{definition}[Time independence and risk factors]\label{simple_assumptions}
	Given Definition \ref{def:additiveModel}, consider discrete-time periods\/
	$U:=\{1,\dots,T\}$ and assume that random variables are independent for different points in time\/ $s\neq t$  in $U$. Moreover, for each $t\in U$, risk factors $\Lambda_1(t),\dots,\Lambda_K(t)$ are assumed to be independent and, for each\/ $k\in\{1,\dots,K\}$, \/ $\Lambda_k(1),\dots,\Lambda_k(T)$ are identically gamma distributed with mean one and variance $\sigma_k^2\geq 0$.
\end{definition}

The assumptions made above seem suitable for Austrian and Australian data, as shown in Section~\ref{validation} via model validation.
In particular, serial dependence is mainly captured by trend families in death probabilities and weights.

For estimation of life tables we usually assume $K=0$ or $K=1$ with $w_{a,g,1}=1$ for all ages and~genders. For~estimation and forecasting of
death causes, we  identify risk factors with underlying death causes.
Note that for fixed $a$ and $g$, Equation  (\ref{eq:WeightFamily}) is invariant under a
	constant shift of parameters $(u_{\mathrm{a},\mathrm{g},k})_{k\in\{0,\dots,K\}}$ as
	well as $(v_{\mathrm{a},\mathrm{g},k})_{k\in\{0,\dots,K\}}$
	if $\phi_0=\dots=\phi_K$ and $\psi_0=\dots=\psi_K$, respectively. Thus,
for each $a$ and $g$,
we can always choose fixed and arbitrary values for $u_{a,g,0}$
and $v_{a,g,0}$.

\subsection{Estimation via Maximum Likelihood}\label{max_like}
We start with the classical \emph{maximum likelihood} approach. The likelihood function can be derived in closed form but, unfortunately, estimates have to be derived via MCMC as deterministic numerical optimisation
quickly breaks down due to high dimensionality.

\begin{lemma}[Likelihood function]\label{MLE}
	Given Definitions\/ \ref{def:additiveModel}--\ref{simple_assumptions},
	define
	\[
		\widehat N_k(t):=\sum_{a=0}^A\sum_{g\in\{\mathrm{f},\mathrm{m}\}}\widehat N_{a,g,k}(t)\,,
		\quad k\in\{0,\dots,K\}\textrm{ and }t\in U\,,
	\]
	as well as\/ $\rho_{a,g,k}(t):=E_{a,g}(t) m_{a,g}(t) w_{a,g,k}(t)$ for all age groups\/ $a$, with maximum age group $A$, and gender\/ $g$~and
	\[
		\rho_k(t):=\sum_{a=0}^A\sum_{g\in\{\mathrm{f},\mathrm{m}\}}\rho_{a,g,k}(t)\,.
	\]

	Then, the likelihood function $\ell( \widehat N|\theta_m,\theta_w,{\sigma})$
	of parameters\/ $\theta_m:=({\alpha},{\beta},\zeta,\eta)\in E$, as well as\/ $\theta_w:=({u},{v},\phi,\psi)\in F$ and\/ ${\sigma}:=(\sigma_k)\in[0,\infty)^{K}$
	 given\/
	$ \widehat N:=(\widehat N_{a,g,k}(t))\in\mathbb{N}_0^{A\times 2\times (K+1)\times T}$
	is given~by
	\begin{equation}\label{MLE_likelihood}
	\begin{split}
		\ell&(\widehat N|\theta_m,\theta_w,{\sigma})=
		\prod_{t=1}^T \Bigg(\bigg(\prod_{a=0}^A\prod_{g\in\{\mathrm{f},\mathrm{m}\}}\frac{e^{-\rho_{a,g,0}(t)} \rho_{a,g,0}(t)^{\widehat N_{a,g,0}(t)}}
		{\widehat N_{a,g,k}(t)!}\bigg)\\
		&\times		
		\prod_{k=1}^K\bigg(\frac{\Gamma(\sigma^{-2}_k+\widehat N_k(t))}
		{\Gamma(\sigma^{-2}_k) \sigma^{2\sigma^{-2}_k}_k (\sigma^{-2}_k+\rho_k(t))^{\sigma^{-2}_k+\widehat N_k(t)}}\prod_{a=0}^A\prod_{g\in\{\mathrm{f},\mathrm{m}\}}
		\frac{ \rho_{a,g,k}(t)^{\widehat N_{a,g,k}(t)}}
		{ \widehat N_{a,g,k}(t)!}\bigg)\Bigg)\,.
	\end{split}
	\end{equation}
\end{lemma}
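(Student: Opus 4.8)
The plan is to compute the joint probability mass function of $\widehat N$ from first principles and then substitute the parametrisation of $m_{a,g}(t)$ and $w_{a,g,k}(t)$ from Definition~\ref{def:families}. By Definition~\ref{simple_assumptions} the arrays $(\widehat N_{a,g,k}(t))_{a,g,k}$ at distinct times $t\in U$ are independent, so the likelihood $\ell(\widehat N|\theta_m,\theta_w,\sigma)$ factorises into a product over $t$, and it suffices to identify, for a fixed year $t$, the joint pmf of $(N_{a,g,k}(t))_{a,g,k}$ evaluated at the observed counts.

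Fixing $t$, the next step is to show that this year-$t$ pmf factorises over $k\in\{0,1,\dots,K\}$. For $k=0$ this is immediate from Definition~\ref{def:additiveModel}: the $N_{a,g,0}(t)$ are mutually independent Poisson variables and independent of everything else. For $k\ge 1$ I would use that $\Lambda_1(t),\dots,\Lambda_K(t)$ are independent and that, conditionally on $(\Lambda_1(t),\dots,\Lambda_K(t))$, all counts $N_{a,g,k}(t)$ are independent with $N_{a,g,k}(t)$ a function of $\Lambda_k(t)$ only; integrating the factorised conditional law against the product law of the risk factors then shows that the sub-arrays $(N_{a,g,k}(t))_{a,g}$, $k=1,\dots,K$, are unconditionally independent as well. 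Hence the year-$t$ pmf is the product of the $K+1$ marginal pmf's of the sub-arrays $(N_{a,g,k}(t))_{a,g}$, $k=0,\dots,K$.

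It then remains to evaluate each of these $K+1$ marginals at $\widehat N$. For $k=0$ one obtains the product of Poisson pmf's $\prod_{a,g}e^{-\rho_{a,g,0}(t)}\rho_{a,g,0}(t)^{\widehat N_{a,g,0}(t)}/\widehat N_{a,g,0}(t)!$. For fixed $k\ge 1$, conditioning on $\Lambda_k(t)=\lambda$ yields a product of Poisson pmf's with parameters $\rho_{a,g,k}(t)\lambda$; collecting powers of $\lambda$, this conditional pmf equals $e^{-\rho_k(t)\lambda}\lambda^{\widehat N_k(t)}\prod_{a,g}\rho_{a,g,k}(t)^{\widehat N_{a,g,k}(t)}/\widehat N_{a,g,k}(t)!$, where $\rho_k(t)$ and $\widehat N_k(t)$ are the sums defined in the statement. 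Since $\Lambda_k(t)$ is gamma distributed with mean one and variance $\sigma_k^2$, i.e.\ with shape $\sigma_k^{-2}$ and scale $\sigma_k^2$, and hence with density $\lambda^{\sigma_k^{-2}-1}e^{-\lambda/\sigma_k^2}/\bigl(\Gamma(\sigma_k^{-2})\sigma_k^{2\sigma_k^{-2}}\bigr)$ on $(0,\infty)$, integrating the conditional pmf against this density and using $\int_0^\infty\lambda^{s-1}e^{-c\lambda}\,d\lambda=\Gamma(s)/c^s$ with $s=\sigma_k^{-2}+\widehat N_k(t)$ and $c=\sigma_k^{-2}+\rho_k(t)$ yields exactly the $k$-th factor in \eqref{MLE_likelihood}. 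Multiplying the $K+1$ factors, then the $T$ years, and inserting the expressions for $m_{a,g}(t)$ and $w_{a,g,k}(t)$ from \eqref{eq:PDFamily}--\eqref{eq:WeightFamily} into $\rho_{a,g,k}(t)=E_{a,g}(t)m_{a,g}(t)w_{a,g,k}(t)$ gives the asserted formula.

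I do not anticipate a genuine obstacle: the content is the standard Poisson--gamma (negative binomial) mixing computation together with careful bookkeeping of the three independence layers (across years, across risk factors, and across $(a,g)$ within a fixed risk factor). The two points deserving a line of care are the passage from conditional independence (given the risk factors) to unconditional independence of the sub-arrays for $k\ge 1$, and the degenerate boundary case $\sigma_k=0$, where $\Lambda_k(t)\equiv 1$ almost surely so that the $k$-th sub-array is simply an independent Poisson family; there the negative binomial factor in \eqref{MLE_likelihood} is to be read as its limit as $\sigma_k^2\downarrow 0$, which indeed converges to $\prod_{a,g}e^{-\rho_{a,g,k}(t)}\rho_{a,g,k}(t)^{\widehat N_{a,g,k}(t)}/\widehat N_{a,g,k}(t)!$.
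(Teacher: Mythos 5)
Your proposal is correct and follows essentially the same route as the paper's proof: factorise over years and risk factors, write the $k$-th factor as the expectation over $\Lambda_k(t)$ of a product of conditional Poisson pmf's, and evaluate the resulting Poisson--gamma integral via the gamma normalisation constant $\Gamma(s)/c^s$. Your explicit treatment of the degenerate case $\sigma_k=0$ and of the passage from conditional to unconditional independence across $k$ are small additions the paper leaves implicit, but the substance is identical.
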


\begin{proof}
Following our assumptions, by straightforward computation {we get}
\[
	\begin{split}
		\ell(\widehat N|\theta_m,\theta_w,{\sigma})=
		\prod_{t=1}^T \Bigg(&\bigg(\prod_{a=0}^A\prod_{g\in\{\mathrm{f},\mathrm{m}\}}\frac{ e^{-\rho_{a,g,0}(t)}\rho_{a,g,0}(t)^{\widehat N_{a,g,0}(t)}}
		{\widehat N_{a,g,0}(t)!}\bigg)\\
		&\hspace{-102pt}\times\prod_{k=1}^K
		\mathbb{E}\Bigg[{\mathbb{P}\bigg(\bigcap_{a=0}^A\bigcap_{g\in\{\mathrm{f},\mathrm{m}\}}
		\bigl\lbrace N_{a,g,k}(t)= \widehat N_{a,g,k}(t)\bigr\rbrace\,\bigg|\,\Lambda_k(t)\bigg)}\Bigg]\Bigg)\,,
	\end{split}
\]
where $\ell(\widehat N|\theta_m,\theta_w,{\sigma})=\mathbb{P}({N}=\widehat{N}|\theta_m,\theta_w,{\sigma})$ denotes the
probability of the event $\{{N}=\widehat{N}\}$ given parameters.
Taking expectations in the equation above yields
\[
	\begin{split}
		\mathbb{E}&{\bigg[\mathbb{P}\bigg(\bigcap_{a=0}^A\bigcap_{g\in\{\mathrm{f},\mathrm{m}\}}
		\bigl\lbrace N_{a,g,k}(t)=\widehat N_{a,g,k}(t)\bigr\rbrace\,\bigg|\,\Lambda_k(t)\bigg)}\bigg]\\
		&=\bigg(\prod_{a=0}^A\prod_{g\in\{\mathrm{f},\mathrm{m}\}}\frac{\rho_{a,g,k}(t)^{\widehat N_{a,g,k}(t)}}
		{\widehat N_{a,g,k}(t)!}\bigg)\int_0^\infty
		e^{-\rho_k(t) x_t}x_t^{\widehat N_{k}(t)} \frac{x_t^{\sigma^{-2}_k-1} e^{-x_t\sigma^{-2}_k}}{\Gamma(\sigma^{-2}_k) \sigma^{2\sigma^{-2}_k}_k}\,dx_t\,.
	\end{split}
\]

The integrand above is a density of a gamma distribution---modulo the normalisation constant---with parameters
$\sigma^{-2}_k+\widehat N_k(t)$ and $\sigma^{-2}_k+\rho_k(t)$. Therefore, the corresponding integral equals the multiplicative inverse of the normalisation constant, i.e.,

\[
\begin{split}
	\bigg(\frac{(\sigma^{-2}_k+\rho_k(t))^{\sigma^{-2}_k+\widehat N_k(t)}}{\Gamma(\sigma^{-2}_k+\widehat N_k(t))}\bigg)^{-1}\,,\quad
	k\in\{1,\dots,K\}\textrm{ and }t\in\{1,\dots,T \}\,.
\end{split}
\]

Putting all results together gives (\ref{MLE_likelihood}).
\end{proof}

Since the products in (\ref{MLE_likelihood}) can become small, we recommend to use the log-likelihood function instead.
For implementations we recommend to use the log-gamma function, e.g.,~the {lgamma} function in  `R'~see \cite{stats}.

\begin{definition}[Maximum likelihood estimates]\label{def_MLE}
	Recalling (\ref{MLE_likelihood}), as well as
	given the assumptions of Lemma \ref{MLE},  \emph{maximum likelihood estimates} for parameters
	$\theta_m,\theta_w$ and $\sigma$
		are defined by
		\end{definition}
\vspace{-12pt}

	\[
		\big({\hat{\theta}_m}^{\mathrm{MLE}},{\hat{\theta}_w}^{\textrm{MLE}},{\hat{\sigma}}^{\textrm{MLE}}\big):=\mathop{\text{arg\,sup}}_{\theta_m,\theta_w,{\sigma}}
		\ell({\widehat N}|\theta_m,\theta_w,{\sigma})=\mathop{\text{arg\,sup}}_{\theta_m,\theta_w,{\sigma}}\log\ell(\widehat{N}|\theta_m,\theta_w,{\sigma})\,.
	\]

	Deterministic optimisation of the likelihood function may quickly lead to
	numerical issues due to high dimensionality.  In \lq R\rq~the deterministic optimisation routine \textsf{nlminb}, see  \cite{stats},
	gives stable results in simple examples.
	Our proposed alternative is to switch to a Bayesian setting and use MCMC as described in
	Section \ref{sec:MCMC}.

\subsection{Estimation via a Maximum a Posteriori Approach}\label{sec:MAP}

Secondly we propose a variation of \emph{maximum a posteriori estimation} based on
Bayesian inference, \cite[section 2.9]{pavel}. If risk factors are not integrated out in the likelihood
function, we may also derive the posterior density of the risk factors as follows.
One main advantage of this approach is that estimates for risk factors are obtained
which is very useful for scenario analysis and model validation. Furthermore, handy
approximations for estimates of risk factor realisations and variances are obtained.

\begin{lemma}[Posterior density]\label{posteriori}
	Given Definitions\/ \ref{def:additiveModel}--\ref{simple_assumptions}, consider parameters\/ $\theta_m:=({\alpha},{\beta},\zeta,\eta)\in E$,
	$\theta_w:=({u},{v},\phi,\psi)\in F$, as well as
	realisations\/ $\lambda:=(\lambda_k(t))\in(0,\infty)^{K\times T}$ of risk factors\/
	${\Lambda}:=(\Lambda_k(t))\in(0,\infty)^{K\times T}$, as well as data\/
	$\widehat N:=(\widehat N_{a,g,k}(t))\in\mathbb{N}_0^{A\times 2\times (K+1)\times T}$.
	Assume that their prior distribution is denoted by $\pi(\theta_m,\theta_w,{\sigma})$.
	Then,
	the posterior density\/ $\pi(\theta_m,\theta_w,{\lambda},{\sigma}| \widehat{N})$
	of parameters given data\/
	$ \widehat{N}$ is up to constant given~by
	\begin{equation}\label{likelihood_simple}
		\begin{split}
			\pi&(\theta_m,\theta_w,{\lambda},{\sigma}| \widehat{N})\propto
			  \pi(\theta_m,\theta_w,{\sigma}) \pi({\lambda}|\theta_m,\theta_w,{\sigma}) \ell( \widehat{N}|\theta_m,\theta_w,{\lambda},{\sigma})\\
			&=\prod_{t=1}^T\Bigg(\bigg(\prod_{a=0}^A\prod_{g\in\{\mathrm{f},\mathrm{m}\}}\frac{e^{-\rho_{a,g,0}(t)}\rho_{a,g,0}(t)^{\widehat N_{a,g,0}(t)}}{\widehat N_{a,g,0}(t)!}\bigg)
			\prod_{k=1}^K\bigg(
			\frac{e^{-\lambda_k(t)\sigma^{-2}_k }\lambda_k(t)^{\sigma^{-2}_k-1}}{\Gamma(\sigma^{-2}_k) \sigma^{2 \sigma^{-2}_k}_k}\\
			& \times\prod_{a=0}^A\prod_{g\in\{\mathrm{f},\mathrm{m}\}}
			\frac{e^{-\rho_{a,g,k}(t) \lambda_k(t)}(\rho_{a,g,k}(t) \lambda_k(t))^{\widehat N_{a,g,k}(t)}}{\widehat N_{a,g,k}(t)!}
			\bigg)\Bigg)   \pi(\theta_m,\theta_w,{\sigma})\,,
		\end{split}
	\end{equation}
	where\/
	$\pi({\lambda}|\theta_m,\theta_w,{\sigma})$
	denotes the prior distribution of risk factors at\/ ${\Lambda}={\lambda}$ given all other parameters, where\/
	$\ell( \widehat{N}|\theta_m,\theta_w,{\lambda},{\sigma})$ denotes the likelihood of\/ ${N}=\widehat{N}$ given all parameters and
	 where\/ $\rho_{a,g,k}(t)=E_{a,g}(t) m_{a,g}(t) w_{a,g,k}(t)$.
\end{lemma}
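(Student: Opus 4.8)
The plan is to use Bayes' theorem together with the chain rule for densities and then to identify each of the three resulting factors explicitly from the model assumptions, \emph{without} integrating out the risk factors (in contrast to Lemma~\ref{MLE}). First I would write, by the definition of conditional densities,
\[
\pi(\theta_m,\theta_w,\lambda,\sigma\,|\,\widehat N)=\frac{\pi(\theta_m,\theta_w,\lambda,\sigma)\,\ell(\widehat N\,|\,\theta_m,\theta_w,\lambda,\sigma)}{\mathbb{P}(N=\widehat N)}\,,
\]
and factor the joint prior as $\pi(\theta_m,\theta_w,\lambda,\sigma)=\pi(\lambda\,|\,\theta_m,\theta_w,\sigma)\,\pi(\theta_m,\theta_w,\sigma)$. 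Since $\mathbb{P}(N=\widehat N)$ does not depend on the parameters, this already yields the proportionality asserted in the first line of~\eqref{likelihood_simple}; it remains to evaluate $\pi(\lambda\,|\,\theta_m,\theta_w,\sigma)$ and $\ell(\widehat N\,|\,\theta_m,\theta_w,\lambda,\sigma)$.

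Second I would evaluate the risk-factor prior. By Definition~\ref{simple_assumptions} the $\Lambda_k(t)$, $k\in\{1,\dots,K\}$, $t\in U$, are mutually independent, each gamma distributed with mean one and variance $\sigma_k^2$; matching these two moments forces both the shape and the rate parameter to equal $\sigma_k^{-2}$, so the density of $\Lambda_k(t)$ at $\lambda_k(t)$ equals $\lambda_k(t)^{\sigma_k^{-2}-1}e^{-\sigma_k^{-2}\lambda_k(t)}/\bigl(\Gamma(\sigma_k^{-2})\,\sigma_k^{2\sigma_k^{-2}}\bigr)$, using $(\sigma_k^{-2})^{\sigma_k^{-2}}=\sigma_k^{-2\sigma_k^{-2}}$. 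Taking the product over $k$ and $t$ gives the middle factor in~\eqref{likelihood_simple}.

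Third I would evaluate the likelihood of the data given \emph{all} parameters, including the realised risk factors $\lambda$. By Definition~\ref{def:additiveModel}, conditionally on $\Lambda=\lambda$ the counts $N_{a,g,k}(t)$ are independent across $a,g,k$, and (by Definition~\ref{simple_assumptions}) across $t$, with $N_{a,g,0}(t)\sim\mathrm{Poisson}(\rho_{a,g,0}(t))$ and $N_{a,g,k}(t)\sim\mathrm{Poisson}(\rho_{a,g,k}(t)\lambda_k(t))$ for $k\ge1$, where $\rho_{a,g,k}(t)=E_{a,g}(t)m_{a,g}(t)w_{a,g,k}(t)$ and where $m_{a,g}(t)$, $w_{a,g,k}(t)$ are the explicit functions of $\theta_m$, $\theta_w$ from Definition~\ref{def:families}. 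Hence $\ell$ is just the product of the corresponding Poisson probability mass functions. Finally I would multiply the three factors, collect the $k=0$ terms (which carry no risk factor), group for each $k\ge1$ the gamma-prior factor with $\prod_{a,g}$ of the Poisson factor, and reorder the finite products to obtain~\eqref{likelihood_simple} verbatim.

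There is no deep obstacle here: the argument is essentially bookkeeping. The only points that require care are (i) reading off the gamma parametrisation correctly, namely shape $=$ rate $=\sigma_k^{-2}$, so that the normalising constants $\Gamma(\sigma_k^{-2})\,\sigma_k^{2\sigma_k^{-2}}$ come out exactly as stated, and (ii) keeping track that the idiosyncratic component $k=0$ is \emph{not} conditioned on any $\Lambda$ whereas the components $k\ge1$ are, so that the conditional independence used to factor $\ell$ is precisely the one granted by Definitions~\ref{def:additiveModel} and~\ref{simple_assumptions}.
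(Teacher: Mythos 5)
Your proposal is correct and follows essentially the same route as the paper: Bayes' theorem plus the factorisation $\pi(\theta_m,\theta_w,\lambda,\sigma)=\pi(\lambda\,|\,\theta_m,\theta_w,\sigma)\,\pi(\theta_m,\theta_w,\sigma)$, the product of gamma densities with shape and rate both equal to $\sigma_k^{-2}$ for the risk-factor prior, and the product of (conditionally) Poisson probability mass functions for the likelihood, with the $k=0$ terms unconditioned. The bookkeeping points you flag, in particular the normalising constant $\Gamma(\sigma_k^{-2})\,\sigma_k^{2\sigma_k^{-2}}$, match the paper's computation exactly.
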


\begin{proof} The first proportional equality follows by Bayes' theorem which is also widely used in
	Bayesian inference, see, for example, \cite[section 2.9]{pavel}.
	Moreover,
	\[
		\pi({\lambda}|\theta_m,\theta_w,{\sigma})=\prod_{k=1}^K\prod_{t=1}^T
		\bigg(\frac{e^{-\lambda_k(t)\sigma^{-2}_k}\lambda_k(t)^{\sigma^{-2}_k-1}}{\Gamma(\sigma^{-2}_k) \sigma^{2 \sigma^{-2}_k}_k}
		\bigg)\,.
	\]

	If $\theta_m\in E$, $\theta_w\in F$, ${\lambda}\in(0,\infty)^{K\times T}$ and ${\sigma}\in[0,\infty)^{K}$, then note that
	\[
		\begin{split}
		\ell( \widehat{N}|\theta_m,\theta_w,{\lambda},{\sigma})&=
			\prod_{a=0}^A\prod_{g\in\{\mathrm{f},\mathrm{m}\}}\prod_{t=1}^T\bigg(e^{-\rho_{a,g,0}(t)}
			\frac{\rho_{a,g,0}(t)^{\widehat N_{a,g,0}(t)}}{\widehat N_{a,g,0}(t)!}\\				&\times\prod_{k=1}^K\mathbb{P}\big(
			 N_{a,g,k}(t)=\widehat N_{a,g,k}(t)\,\big|\,		
			\Lambda_k(t)=\lambda_k(t)\big)\bigg)\,,
		\end{split}
	\]
	which then gives 	(\ref{likelihood_simple}) by straightforward computation as in Lemma \ref{MLE}.
\end{proof}

The approach described above may look like a pure Bayesian inference approach but note that
risk factors $\Lambda_k(t)$ are truly stochastic and, therefore, we refer to it as a maximum a posteriori
estimation approach. There are many reasonable choices for prior distributions of parameters which include (improper) uniform priors
$\pi(\theta_m,\theta_w,{\sigma}):=
		1_E({\theta_m})1_{F}({\theta_w}) 1_{(0,\infty)^K}({\sigma})$ to smoothing priors
		as given in Section \ref{sec:forecaDP}.
Having derived the posterior density, we can now define corresponding maximum a~posteriori estimates.

\begin{definition}[Maximum a posteriori estimates]\label{def_posteriori}
	Recalling (\ref{likelihood_simple}), as
	well as given the assumptions of Lemma~\ref{posteriori}, \emph{maximum a posteriori estimates} for parameters
	$\theta_m,\theta_w,{\lambda}$ and ${\sigma}$, given uniqueness, are defined by
	\[
		\begin{split}
		\big({\hat{\theta}_m}^{\mathrm{MAP}},{\hat{\theta}_w}^{\mathrm{MAP}},{\hat{\lambda}}^{\mathrm{MAP}},{\hat{\sigma}}^{\mathrm{MAP}}\big) & :=
		\mathop{\emph{arg\,sup}}_{\theta_m,\theta_w,{\lambda},{\sigma}}\pi(\theta_m,\theta_w,{\lambda},{\sigma}| \widehat{N})\\ & \,\,=\mathop{\emph{arg\,sup}}_{\theta_m,\theta_w,{\lambda},{\sigma}}\log\pi(\theta_m,\theta_w,{\lambda},{\sigma}|\widehat{N})\,.
		\end{split}
	\]
\end{definition}

Again, deterministic optimisation of the posterior function may quickly lead to
	numerical issues due to high dimensionality
	of the posterior function which is why we recommend MCMC.
However, we can provide handy approximations for risk factor and variance estimates.

\begin{lemma}[Conditions for maximum a posteriori estimates]\label{prop_MAP}
	Given Definition \ref{def_posteriori},  estimates\/
	${\hat{\lambda}}^{\mathrm{MAP}}$  and\/
	${\hat{\sigma}}^{\mathrm{MAP}}$
	satisfy,  for every\/ $k\in\{1,\dots,K\}$ and\/ $t\in U$,
	\begin{equation}\label{lambda_MAP}
		\hat{\lambda}^{\mathrm{MAP}}_k(t)=\frac{(\hat{\sigma}_k^{\mathrm{MAP}})^{-2}-1+\sum_{a=0}^A\sum_{g\in\{\mathrm{f},\mathrm{m}\}}\widehat N_{a,g,k}(t)}
	{(\hat{\sigma}_k^{\mathrm{MAP}})^{-2}+\sum_{a=0}^A\sum_{g\in\{\mathrm{f},\mathrm{m}\}} \rho_{a,g,k}(t)}
	\end{equation}
	if\/ $(\hat{\sigma}_k^{\mathrm{MAP}})^{-2}-1+\sum_{a=0}^A\sum_{g\in\{\mathrm{f},\mathrm{m}\}} \widehat N_{a,g,k}(t)> 0$, as well as
	\begin{equation}\label{beta_MAP}
		2\log \hat{\sigma}_k^{\mathrm{MAP}}+\frac{\Gamma'\big((\hat{\sigma}_k^{\mathrm{MAP}})^{-2}\big)}{\Gamma\big((\hat{\sigma}_k^{\mathrm{MAP}})^{-2}\big)}
		=\frac{1}{T}\sum_{t=1}^T\big(1+\log\hat{\lambda}^{\mathrm{MAP}}_k(t)-\hat{\lambda}^{\mathrm{MAP}}_k(t)\big)\,,
	\end{equation}
	where, for given\/ $\hat{\lambda}^{\mathrm{MAP}}_k(1),\dots,\hat{\lambda}^{\mathrm{MAP}}_k(T)>0$, (\ref{beta_MAP})
	has a unique solution which is strictly positive.
\end{lemma}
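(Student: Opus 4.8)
The plan is to recognise the two asserted identities as the vanishing‑gradient conditions for $\log\pi(\theta_m,\theta_w,\lambda,\sigma\mid\widehat N)$ from (\ref{likelihood_simple}) in the coordinates $\lambda_k(t)$ and $\sigma_k$. Since $\lambda$ ranges over the open set $(0,\infty)^{K\times T}$ and the relevant part of $\sigma$ over $(0,\infty)^K$ (the boundary $\sigma_k=0$ is excluded, as there the gamma factor is no longer a Lebesgue density), any maximiser is an interior critical point and hence satisfies $\partial_{\lambda_k(t)}\log\pi=0$ and $\partial_{\sigma_k}\log\pi=0$. I would take the prior $\pi(\theta_m,\theta_w,\sigma)$ to be locally constant in $\sigma$ — e.g.\ the improper uniform prior of Section~\ref{sec:MAP} — so that it drops out of these derivatives (it never involves $\lambda$ in any case).

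For (\ref{lambda_MAP}) I would fix $k,t$ and isolate the factors of (\ref{likelihood_simple}) containing $\lambda_k(t)$, namely $e^{-\lambda_k(t)\sigma_k^{-2}}\lambda_k(t)^{\sigma_k^{-2}-1}$ and $\prod_{a,g}e^{-\rho_{a,g,k}(t)\lambda_k(t)}(\rho_{a,g,k}(t)\lambda_k(t))^{\widehat N_{a,g,k}(t)}$; up to a $\lambda_k(t)$‑free constant their logarithm is $-A\,\lambda_k(t)+B\log\lambda_k(t)$ with $A:=\sigma_k^{-2}+\sum_{a,g}\rho_{a,g,k}(t)>0$ and $B:=\sigma_k^{-2}-1+\sum_{a,g}\widehat N_{a,g,k}(t)$, so $-A+B/\lambda_k(t)=0$ gives $\lambda_k(t)=B/A$, which is (\ref{lambda_MAP}); this is an interior point exactly when $B>0$, and then the partial function is strictly concave, so it is the unique maximiser (if $B\le0$ the partial function is strictly decreasing on $(0,\infty)$ and there is no interior maximiser — the excluded degenerate case). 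For (\ref{beta_MAP}) the only $\sigma_k$‑dependent factors are $\prod_{t=1}^T e^{-\lambda_k(t)\sigma_k^{-2}}\lambda_k(t)^{\sigma_k^{-2}-1}\big/\big(\Gamma(\sigma_k^{-2})\sigma_k^{2\sigma_k^{-2}}\big)$, whose logarithm equals $\sigma_k^{-2}\sum_t(\log\lambda_k(t)-\lambda_k(t))-T\log\Gamma(\sigma_k^{-2})-2T\sigma_k^{-2}\log\sigma_k$ up to a constant; differentiating in $\sigma_k$ through the substitution $y=\sigma_k^{-2}$, cancelling the common factor $2T\sigma_k^{-3}$ and rearranging yields exactly (\ref{beta_MAP}), with $\psi:=\Gamma'/\Gamma$ the digamma function. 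Both of these steps are routine bookkeeping.

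The substantive part, and where I expect the main obstacle, is showing that (\ref{beta_MAP}) has a unique strictly positive solution for given $\hat\lambda_k^{\mathrm{MAP}}(1),\dots,\hat\lambda_k^{\mathrm{MAP}}(T)>0$. Write its left‑hand side as $h(s):=2\log s+\psi(s^{-2})$, $s>0$, and its right‑hand side as the constant $c:=\frac1T\sum_{t=1}^T\big(1+\log\hat\lambda_k^{\mathrm{MAP}}(t)-\hat\lambda_k^{\mathrm{MAP}}(t)\big)$. First I would show $h$ is strictly decreasing: $h'(s)=\frac2s-\frac2{s^3}\psi'(s^{-2})<0$, because $\psi'(y)=\sum_{n\ge0}(y+n)^{-2}>\int_0^\infty (y+x)^{-2}\,dx=1/y$ for every $y>0$. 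Next I would read off the boundary behaviour from the digamma asymptotics: as $s\to0^+$, $y=s^{-2}\to\infty$ and $\psi(y)=\log y-\frac1{2y}+O(y^{-2})$, so $h(s)=-\frac{s^2}{2}+O(s^4)\to0^-$; as $s\to\infty$, $y\to0^+$ and $\psi(y)=-\frac1y-\gamma+o(1)\to-\infty$ dominates $-\log y$, so $h(s)\to-\infty$. Hence $h$ is a continuous strictly decreasing bijection of $(0,\infty)$ onto $(-\infty,0)$. Finally, since $1+\log x-x\le0$ for all $x>0$ we have $c\le0$, with $c<0$ unless $\hat\lambda_k^{\mathrm{MAP}}(t)=1$ for every $t$ (in which degenerate case the supremum over $\sigma_k$ is attained only at the boundary $\sigma_k=0$); in the remaining case $c\in(-\infty,0)$, so there is exactly one $\hat\sigma_k^{\mathrm{MAP}}>0$ with $h(\hat\sigma_k^{\mathrm{MAP}})=c$, which in particular is strictly positive as claimed. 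Assembling the three computations finishes the proof.
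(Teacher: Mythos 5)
Your proposal is correct and follows essentially the same route as the paper: derive \eqref{lambda_MAP} and \eqref{beta_MAP} as first-order conditions on the log-posterior, show the right-hand side of \eqref{beta_MAP} is strictly negative via $\log x\le x-1$, and establish that the left-hand side is a strictly decreasing bijection of $(0,\infty)$ onto $(-\infty,0)$ using $\psi'(y)>1/y$ (your $h(s)$ is exactly $-f(s^{-2})$ for the paper's $f(x)=\log x-\Gamma'(x)/\Gamma(x)$, and the boundary behaviour follows from the same digamma bounds). Your added remarks on interiority, the degenerate case $\hat\lambda\equiv 1$, and the prior being constant in $\sigma$ only make explicit what the paper leaves implicit.
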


\begin{proof}
	First, set $\pi^*(\widehat{N})
		:=\log\pi(\theta_m,\theta_w,{\lambda},{\sigma}|\widehat{N})$.
	Then, for every $k\in\{1,\dots,K\}$ and $t\in U$, differentiating $\pi^*(\widehat{N})$ gives
	\[
		\frac{\partial\pi^*( \widehat{N})}{\partial \lambda_k(t)}
		=\frac{\sigma^{-2}_k-1}{\lambda_k(t)}-\frac{1}{\sigma^2_k}+\sum_{a=0}^A\sum_{g\in\{\mathrm{f},\mathrm{m}\}}\Big(\frac{\widehat N_{a,g,k}(t)}{\lambda_k(t)}-\rho_{a,g,k}(t)\Big)\,.
	\]

	Setting this term equal to zero and solving for $\Lambda_k(t)$ gives (\ref{lambda_MAP}).
	Similarly, for every $k\in\{1,\dots,K\}$, we obtain
	\[
		\frac{\partial\pi^*(\widehat{N})}{\partial\sigma^2_k}
		=\frac{1}{\sigma_k^4}\sum_{t=1}^T\Big(\log\sigma^2_k-1+\frac{\Gamma'(\sigma^{-2}_k)}{\Gamma(\sigma^{-2}_k)}-\log\lambda_k(t)+\lambda_k(t)\Big)\,.
	\]

	Again, setting this term equal to zero and rearranging the terms gives (\ref{beta_MAP}).
			
	For existence, uniqueness of the solution in (\ref{beta_MAP}), let
	$\hat{\lambda}^{\mathrm{MAP}}_k(1),\dots,\hat{\lambda}^{\mathrm{MAP}}_k(T)>0$ and assume $k\in\{1,\dots,K\}$ to be fixed.  Then,
	note that the right side in (\ref{beta_MAP}) is strictly negative
	unless $\hat{\lambda}^{\mathrm{MAP}}_k(1)=\dots=\hat{\lambda}^{\mathrm{MAP}}_k(T)=1$, as $\log x \leq x-1$ for all $x>0$ with equality for $x=1$. If~$\hat{\lambda}^{\mathrm{MAP}}_k(1)=\dots=\hat{\lambda}^{\mathrm{MAP}}_k(T)=1$, then there is no variability in the risk factor such that $\sigma^2_k=0$. Henceforth, note that
	$f(x):=\log x-{\Gamma'(x)}/{\Gamma(x)}$, for all $x>0$,
	is continuous ($\Gamma'(x)/\Gamma(x)$ is known as digamma function or $\psi$-function) with
	\begin{equation}\label{digamma_approx}
		\frac{1}{2 x}<f(x)<\frac{1}{2 x}+\frac{1}{12 x^2}\,,\quad x>0\,,
	\end{equation}
	which follows by~\cite[Corollary 1]{digamma2} and
	$f(x+1)=1/x+f(x)$ for all $x>0$. As we want to solve $-f(1/x)=-c$ for some given $c>0$,
	note that $f(0+)=\infty$, as well as
	$\lim_{x\to\infty}f(x)=0$. \mbox{Thus, a solution} has to exist as $f(1/x)$ is continuous on $x>0$. Furthermore,
	\[
		f'(x)=\frac{1}{x}-\sum_{i=0}^\infty\frac{1}{(x+i)^2}< \frac{1}{x}-\int_{x}^\infty \frac{1}{z^2}\,dz= 0
		\,,\quad x>0\,,
	\]
	where the first equality follows by \cite{digamma}. This implies that  $f(x)$ and $(-f(1/x))$ are strictly decreasing. Thus, the solution in (\ref{beta_MAP}) is unique.
\end{proof}

Using Lemma \ref{prop_MAP}, it is possible to derive handy approximations for risk factor and variance estimates, given  estimates for weights and death probabilities which can be derived by matching of moments as given in Section \ref{sec:MM} or other models such as Lee--Carter. If $\sum_{a=0}^A\sum_{g\in\{\mathrm{f},\mathrm{m}\}}\widehat N_{a,g,k}(t)$
	is large, it is reasonable to define
	\begin{equation}\label{MAPappr_lambda}
		\hat{\lambda}^{\mathrm{MAPappr}}_k(t):=\frac{-1+\sum_{a=0}^A\sum_{g\in\{\mathrm{f},\mathrm{m}\}}\widehat N_{a,g,k}(t)}
		{\sum_{a=0}^A\sum_{g\in\{\mathrm{f},\mathrm{m}\}} \rho_{a,g,k}(t)}
	\end{equation}
	as an approximative estimate for $\lambda_k(t)$
	where $\rho_{a,g,k}(t):=E_{a,g}(t) m_{a,g}(t) w_{a,g,k}(t)$.
	Having derived approximations for ${\lambda}$,
	we can use (\ref{beta_MAP})
	to get estimates for ${\sigma}$.
	Alternatively, note that due to (\ref{digamma_approx}), we get
	\[
		-2\log \hat{\sigma}_k^{\mathrm{MAP}}-\frac{\Gamma'\big((\hat{\sigma}_k^{\mathrm{MAP}})^{-2}\big)}{\Gamma\big((\hat{\sigma}_k^{\mathrm{MAP}})^{-2}\big)}
		= \frac{(\hat{\sigma}_k^{\mathrm{MAP}})^2}{2}+\mathcal{O}\big((\hat{\sigma}_k^{\mathrm{MAP}})^4\big)\,.
	\]

	Furthermore, if we use second order Taylor expansion for the logarithm,
	then the right hand side of
	 (\ref{beta_MAP}) gets
	\[
		\frac{1}{T}\sum_{t=1}^T\big(\hat{\lambda}^{\mathrm{MAP}}_k(t)-1-\log\hat{\lambda}^{\mathrm{MAP}}_k(t)\big)=
		\frac{1}{2 T}\sum_{t=1}^T\Big( \big(\hat{\lambda}^{\mathrm{MAP}}_k(t)-1\big)^2+
		\mathcal{O}\big(\big(\hat{\lambda}^{\mathrm{MAP}}_k(t)-1\big)^3\big)\Big) \,.
	\]

	This approximation is better the closer the values of ${\lambda}$ are  to one.
	Thus, using these observations, an approximation for risk factor variances ${\sigma}^2$ is given by
	\begin{equation}\label{MAPappr_beta}
		 \big(\hat{\sigma}_k^{\mathrm{MAPappr}}\big)^2:={
		\frac{1}{T}\sum_{t=1}^T \big(\hat{\lambda}^{\mathrm{MAPappr}}_k(t)-1\big)^2}\,,
	\end{equation}
	which is the sample variance of ${\hat{\lambda}}^{\mathrm{MAP}}$.
	Note $|\hat{\lambda}^{\mathrm{MAP}}_k(t)-1|<|\hat{\lambda}^{\mathrm{MAPappr}}_k(t)-1|$,
	implying that (\ref{MAPappr_beta}) will dominate
	solutions obtained by (\ref{beta_MAP}) in most cases.

\subsection{Estimation via MCMC}\label{sec:MCMC}
As we have already outlined in in the previous sections, deriving maximum a posteriori estimates and
maximum likelihood estimates via deterministic numerical optimisation is mostly impossible due to high dimensionality
(several hundred parameters). Alternatively, we can use {MCMC} under a~Bayesian setting.
Introductions to this topic can be found, for example, in~\cite{MCMC},
\cite{gamerman}, as well as \cite[section~2.11]{pavel}.
We suggest to use the random walk Metropolis--Hastings within Gibbs algorithm which,
given that the Markov chain is irreducible and aperiodic,
generates sample chains that converge to the stationary distribution,
 \cite{tierney} and also \cite[sections 6--10]{Roberts2004}. However, note that
various MCMC algorithms are available.

MCMC requires a Bayesian setting which we
automatically have in the maximum a posteriori approach, see Section \ref{sec:MAP}. Similarly, we can switch
to a Bayesian setting in the maximum likelihood approach, see Section \ref{max_like}, by simply multiplying the
likelihood function with a prior distribution of parameters.
MCMC generates Markov chains which provide samples from the posterior distribution where the mode of these samples then corresponds
to an approximation for the maximum a posteriori estimate. More stable estimates in terms of mean squared error
are obtained by taking the mean over all samples once MCMC chains sample
from the stationary distribution, \mbox{\cite[section 2.10]{pavel}}. Taking the mean over all samples as an estimate, of course, can lead to troubles if posterior
distributions of parameters are, e.g., bimodal, such that we end up in a region which is highly unlikely.
Furthermore, sampled posterior distribution can be used to estimate parameter uncertainty.
The method requires a~certain burn-in period until the generated chain becomes stationary. Typically, one tries to get average acceptance probabilities close to $0.234$
which is asymptotically optimal for multivariate Gaussian proposals as shown in~\cite{optimal_accept}.
To reduce long computational times, one can run several independent MCMC chains with different starting points on different
CPUs in a parallel way. To~prevent overfitting, it is possible to regularise, i.e., smooth, maximum a posteriori estimates via
adjusting the prior distribution. This technique is particularly used in regression, as well as in many applications, such as
signal processing. When forecasting death probabilities in Section \ref{sec:forecaDP}, we use a~Gaussian prior density with a certain
correlation structure.

Also under MCMC, note that ultimately we are troubled with the curse of dimensionality as we
will never be able to get an accurate approximation of the joint posterior distribution in a setting
with several hundred parameters.

As MCMC samples yield confidence bands for parameter estimates, they can easily be checked for significance at every desired
level, i.e.,~parameters are not significant if confidence bands cover the value zero. In our subsequent examples, almost all
parameters are significant. Given internal mortality data, these confidence bands for parameter estimates can also be used to
test whether parameters significantly deviate from officially published life tables. On the other hand, MCMC is perfectly
applicable to sparse data as life tables can be used as prior distributions with confidence bands providing an estimate for parameter uncertainty which increase with fewer data points.

\subsection{Estimation via Matching of Moments}\label{sec:MM}

Finally, we provide a \emph{matching of moments} approach which allows easier estimation of parameters
but which is less accurate.
Therefore, we suggest this approach solely to be used to obtain starting values for the other, more sophisticated
estimation procedures.
In addition, matching of moments approach needs simplifying assumptions to guarantee independent and
identical random variables over time.

\begin{assumption}[i.i.d.~setting]\label{constant_weight}
	Given Definitions\/ \ref{def:additiveModel}--\ref{simple_assumptions}, assume death counts\/
	$(N_{a,g,k}(t))_{t\in U}$ to be i.i.d.~with\/ $E_{a,g}:=E_{a,g}(1)=\dots=\mathbb{E}_{a,
	g}(T)$ and\/ $m_{a,g}:=m_{a,g}(1)=\dots=m_{a,
	g}(T)$, as well as\/ $w_{a,g,k}:=w_{a,g,k}(1)=\dots=w_{a,
	g,k}(T)$.
\end{assumption}

To achieve such an i.i.d.~setting, transform deaths $N_{a,g,k}(t)$
such that Poisson mixture intensities are constant over time via
\[
	N'_{a,g,k}(t):=\bigg\lfloor\frac{E_{a,g}(T)
	m_{a,g}(T) w_{a,g,k}(T)}{E_{a,g}(t)
	m_{a,g}(t) w_{a,g,k}(t)}
	N_{a,g,k}(t)\bigg\rfloor\,,\quad t\in U\,,
\]
and, correspondingly, define $E_{a,g}:=E_{a,g}(T)$,
as well as $m_{a,g}:=m_{a,g}(T)$ and $w_{a,g,k}:=w_{a,g,k}(T)$. Using~this modification,
	we manage to remove long term trends and keep $E_{a,g}(t)$,
	$m_{a,g}(t)$ and $w_{a,g,k}(t)$
	constant over time.
	
	Estimates $\hat{m}^{\mathrm{MM}}_{a,g}(t)$
	for central death
	rates $m_{a,g}(t)$ can be obtained via
	minimising mean squared error to crude death rates which,
	if parameters $\zeta$, $\eta$ and $\gamma$ are previously fixed, can be obtained by  regressing
	\[
		(F^{\mathrm{Lap}})^{-1}\bigg(\frac{\sum_{k=0}^K \widehat N'_{a,g,k}(t)}{E_{a,g}}\bigg)-\gamma_{a-t}
	\]
	on $\mathcal{T}_{\zeta_{a,g},\eta_{a,g}}(t)$.
	Estimates
	$\hat{u}^{\mathrm{MM}}_{a,g,k},\hat{v}^{\mathrm{MM}}_{a,g,k},\hat{\phi}^{\mathrm{MM}}_{k},\hat{\psi}^{\mathrm{MM}}_{k}$  for parameters $u_{a,g,k},v_{a,g,k},\phi_k,\psi_k$ via
	minimising the mean squared error to crude death rates which
	again, if parameters $\phi$ and $\psi$ are previously fixed, can~be obtained by regressing
	$\log(\widehat N'_{a,g,k}(t))-\log(E_{a,g} \hat{m}^{\mathrm{MM}}_{a,g}(t))$
	on $\mathcal{T}_{\phi_k,\psi_k}(t)$.
	Estimates
	$\hat{w}^{\mathrm{MM}}_{a,g,k}(t)$
	are then given by (\ref{eq:WeightFamily}).
	
	Then, define unbiased estimators for weights
	$W^*_{a,g,k}(t):={N'_{a,g,k}(t)}/{E_{a,g} m_{a,g}}$,
	as well as
	\[
		\overline{W}^{*}_{a,g,k}:=\frac{1}{T}\sum_{t=1}^T{W}^{*}_{a,g,k}(t)\,.
	\]

	In particular, we have
	$\mathbb{E}[\overline{W}^{*}_{a,g,k}]=\mathbb{E}[{W}^{*}_{a,g,k}(t)]= w_{a,g,k}$.
		
\begin{lemma}
	Given Assumptions \ref{simple_assumptions}, define
	\[
		\widehat{\Sigma}_{a,g,k}^2=\frac{1}{T-1}\sum_{t=1}^T\big({W}^{*}_{a,g,k}(t)-\overline{W}^{*}_{a,g,k}\big)^2\,,
	\]
	for all\/ $a\in \{0,\dots,A\}$, $g\in\{\mathrm{f},\mathrm{m}\}$ and\/ $k\in\{0,\dots,K\}$.
	Then,
	\begin{equation}\label{var_single}
		\mathbb{E}[{\widehat{\Sigma}_{a,g,k}^2}]=\var{{W}^{*}_{a,g,k}(1)}
		=\frac{w_{a,g,k}}{ E_{a,g} m_{a,g}}+\sigma^2_k w_{a,g,k}^2\,.
	\end{equation}
\end{lemma}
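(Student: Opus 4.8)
The plan is to read (\ref{var_single}) as two separate equalities and prove each in turn: the first, $\mathbb{E}[\widehat{\Sigma}_{a,g,k}^2]=\var{W^*_{a,g,k}(1)}$, is just the unbiasedness of the sample variance for i.i.d.\ data, and the second is a one-line evaluation of that variance by conditioning on the risk factor.

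For the first equality I would first note that, under the i.i.d.\ setting of Assumption~\ref{constant_weight}, the variables $W^*_{a,g,k}(1),\dots,W^*_{a,g,k}(T)$ are i.i.d.\ with finite second moment. For $k\in\{1,\dots,K\}$ this is because the count underlying $W^*_{a,g,k}(t)$ is, conditionally on $\Lambda_k(t)$, Poisson with mean $E_{a,g}m_{a,g}w_{a,g,k}\Lambda_k(t)$, while the mixing variables $\Lambda_k(1),\dots,\Lambda_k(T)$ are i.i.d.\ by Definition~\ref{simple_assumptions}, so the resulting gamma--mixed Poisson laws are i.i.d.; for $k=0$ the counts are genuinely i.i.d.\ Poisson. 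Writing $\mu:=\mathbb{E}[W^*_{a,g,k}(1)]=w_{a,g,k}$, one has $\mathbb{E}\big[\sum_{t=1}^T (W^*_{a,g,k}(t))^2\big]=T(\var{W^*_{a,g,k}(1)}+\mu^2)$ and $\mathbb{E}[(\overline{W}^{*}_{a,g,k})^2]=\var{W^*_{a,g,k}(1)}/T+\mu^2$, so that the algebraic identity $\sum_{t=1}^T(W^*_{a,g,k}(t)-\overline{W}^{*}_{a,g,k})^2=\sum_{t=1}^T (W^*_{a,g,k}(t))^2-T(\overline{W}^{*}_{a,g,k})^2$ yields $\mathbb{E}[\widehat{\Sigma}_{a,g,k}^2]=\var{W^*_{a,g,k}(1)}$ after dividing by $T-1$. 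This is exactly the standard computation showing the sample variance is unbiased.

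It then remains to evaluate $\var{W^*_{a,g,k}(1)}=(E_{a,g}m_{a,g})^{-2}\var{N'_{a,g,k}(1)}$. For $k\in\{1,\dots,K\}$ I would apply the law of total variance conditioning on $\Lambda_k(1)$: since $N'_{a,g,k}(1)\mid\Lambda_k(1)$ is Poisson with parameter $E_{a,g}m_{a,g}w_{a,g,k}\Lambda_k(1)$, both its conditional mean and its conditional variance equal this quantity, so $\var{N'_{a,g,k}(1)}=E_{a,g}m_{a,g}w_{a,g,k}\,\mathbb{E}[\Lambda_k(1)]+(E_{a,g}m_{a,g}w_{a,g,k})^2\var{\Lambda_k(1)}$. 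Substituting $\mathbb{E}[\Lambda_k(1)]=1$ and $\var{\Lambda_k(1)}=\sigma_k^2$ from Definition~\ref{simple_assumptions} and dividing by $(E_{a,g}m_{a,g})^2$ gives $\var{W^*_{a,g,k}(1)}=w_{a,g,k}/(E_{a,g}m_{a,g})+\sigma_k^2 w_{a,g,k}^2$, which is (\ref{var_single}) and is consistent with the negative-binomial variance recorded in the remark following (\ref{covariance_simple}). For $k=0$ the intensity $E_{a,g}m_{a,g}w_{a,g,0}$ is deterministic, hence $\var{N'_{a,g,0}(1)}=E_{a,g}m_{a,g}w_{a,g,0}$, and (\ref{var_single}) holds with the convention $\sigma_0=0$.

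No step is a genuine obstacle; the only points that need care are that the unbiasedness argument requires the observations to be \emph{genuinely} i.i.d., so one must work with the unconditional gamma--mixed Poisson law rather than conditionally on the entire risk-factor vector (where the counts are independent but not identically distributed across $t$ in general), and the harmless degenerate case $k=0$, where $\sigma_0$ is undefined and is simply read as zero.
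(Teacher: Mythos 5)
Your proposal is correct and follows essentially the same route as the paper: establish that $(W^*_{a,g,k}(t))_{t\in U}$ is i.i.d.\ so that the sample variance is unbiased (the paper simply cites this fact where you derive it), and then evaluate $\var{W^*_{a,g,k}(1)}$ via the law of total variance conditioning on the risk factor, using that the conditional mean and variance of the Poisson count coincide. Your explicit treatment of the degenerate case $k=0$ and the remark on where the identical distribution across $t$ comes from are minor additions of care, not a different argument.
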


\begin{proof}
	Note that $({W}^{*}_{a,g,k}(t))_{t\in U}$ is assumed to be an i.i.d.~sequence. Thus,
	since $\widehat{\Sigma}_{a,g,k}$ is an unbiased estimator for the standard deviation of
	${W}^{*}_{a,g,k}(1)$ and $\overline{W}^{*}_{a,g,k}$, see~\cite[Example~11.2.6]{statistic},
	we immediately get
	\[
		\mathbb{E}[{\widehat{\Sigma}_{a,g,k}^2}]=\text{Var}({\overline{W}^*_{a,g,k}(1)})=\text{Var}\bigg(\frac{N'_{a,g,k}(1)}{E_{a,g} m_{a,g}}\bigg)\,.
	\]

	Using the law of total variance as in \cite[Lemma 3.48]{schmock}, as well as  Definition \ref{simple_model} gives  
	\[
			E_{a,g}^2 m_{a,g}^2 \mathbb{E}[{\widehat{\Sigma}_{a,g,k}^2}]
			=\mathbb{E}[{\text{Var}({N'_{a,g,k}(1)}|{ \Lambda_k}})]+\text{Var}({\mathbb{E}[{N'_{a,g,k}(1)}|{ \Lambda_k}}])\,.
	\]

	Since $\text{Var}({N'_{a,g,k}(1)}|{ \Lambda_k})=\mathbb{E}[{N'_{a,g,k}(1)}|{ \Lambda_k}]=E_{a,g}m_{a,g} w_{a,g,k} \Lambda_k$ a.s., the equation above
	 gives the~result.
\end{proof}

Having obtained Equation (\ref{var_single}), we may define the following matching of moments estimates for
risk factor variances.

\begin{definition}[Matching of moments estimates for risk factor variances]\label{MMest}
	Given Assumption \ref{simple_assumptions}, the~\emph{matching of moments estimate}
	for $\sigma_k$ for all $k\in\{1,\dots,K\}$ is defined as
	\[
		\big(\hat{\sigma}_{k}^{\mathrm{MM}}\big)^2:=
		{\max\Biggl\lbrace0,\frac{\sum_{a=0}^A\sum_{g\in\{\mathrm{f},\mathrm{m}\}}\Big(\hat{\sigma}_{a,g,k}^2
		-\frac{ w^{\mathrm{MM}}_{a,g,k}(T)}{E_{a,g} m^{\mathrm{MM}}_{a,g}(T)}\Big)}{\sum_{a=0}^A\sum_{g\in\{\mathrm{f},\mathrm{m}\}} (w^{\mathrm{MM}}_{a,g,k}(T))^2}\Biggr\rbrace}
		\,,
	\]
	where $\hat{\sigma}_{a,g,k}^2$ is the estimate corresponding to estimator $\widehat{\Sigma}_{a,g,k}^2$.
\end{definition}

\section{Applications}\label{applications_mortality}

\subsection{Prediction of Underlying Death Causes}\label{sec:real}

As an applied example for our proposed stochastic mortality model, as well as for some further applications,  we take annual death data from Australia for the period 1987 to 2011. We fit our model
 using the matching of moments approach, as well as the maximum-likelihood approach
with Markov chain Monte Carlo (MCMC).
Data source for historical Australian population, categorised by age and gender, is taken from the
Australian Bureau
of Statistics\footnote{{\href{http://www.abs.gov.au/AUSSTATS/abs@.nsf/DetailsPage/3101.0Jun\%202013?OpenDocument}{http://www.abs.gov.au/}},
 accessed on May 10, 2016.} and data for the number of deaths categorised by death cause and divided into eight age categories, i.e., 50--54~years,
55--59~years, 60--64 years, 65--69~years, 70--74 years, 75--79 years, 80--84 years and 85+ years, denoted by $a_1,\dots,a_8$, respectively,
for each gender
is taken from the AIHW\footnote{{\href{http://www.aihw.gov.au/deaths/aihw-deaths-data/\#nmd}{http://www.aihw.gov.au/deaths/aihw-deaths-data/\#nmd}},
 accessed on May 10, 2016.}. The provided death data
is divided into 19 different death causes---based on the ICD-9 or ICD-10 classification---where we identify the following ten
of them with common non-idiosyncratic risk factors: \textit{\lq certain infectious and parasitic diseases\rq,
\lq neoplasms\rq, \lq endocrine, nutritional and metabolic diseases\rq, \lq mental and behavioural disorders\rq,
\lq diseases of the nervous system\rq, \lq circulatory diseases\rq, \lq diseases of the respiratory system\rq,
\lq diseases of the digestive system\rq,
\lq external~causes of injury and poisoning\rq,
\lq diseases of the genitourinary system\rq}. We merge the remaining eight death causes to idiosyncratic risk as
their individual contributions to overall death counts are small for all categories.
Data handling needs some care as there was a change in classification of  death data in 1997 as explained at the website of the Australian Bureau of Statistics\footnote{{\href{http://www.abs.gov.au/ausstats/abs@.nsf/Products/3303.0~2007~Appendix~Comparability+of+statistics+over+time+\%28Appendix\%29?OpenDocument}{http://www.abs.gov.au/}}, accessed on May 10, 2016.}.
Australia introduced the tenth revision of the International Classification of Diseases (ICD-10, following ICD-9) in 1997, with a transition period from 1997 to 1998. Within this period,
comparability factors are given in Table \ref{tab:ComparabilityFactors}.
Thus, for the period 1987 to 1996, death counts have to be multiplied by corresponding
comparability factors.
\begin{table}[ht]
	\begin{center}\footnotesize{
		\caption[Example Australia: Comparability factors ICD-9 to ICD-10.]{Comparability factors for ICD-9 to ICD-10.}
		\setlength{\extrarowheight}{0.5pt}
		\label{tab:ComparabilityFactors}
		\begin{tabular}{r|r}
			\hline\rule[-6pt]{0pt}{18pt}
			death cause & factor \\ 
			\hline\rule[10pt]{0pt}{0pt}
			infectious& 	1.25 \\
			neoplasms& 1.00 \\
			endocrine& 1.01 \\
			mental & 	0.78 \\
			nervous& 1.20 \\
			circulatory& 1.00 \\
			respiratory& 0.91 \\
	    digestive& 1.05 \\
			genitourinary& 1.14 \\
			external& 1.06 \\\rule[-5pt]{0pt}{0pt}
			not elsewhere (idio.)& 	1.00
			 \\\hline
		\end{tabular}}
	\end{center}
\end{table}

To reduce the number of parameters which have to be estimated, cohort effects are not considered, i.e.,~$\gamma=0$, and
trend reduction parameters are fixed with values $\zeta=\phi=0$ and $\eta=\psi=\frac{1}{150}$. This~corresponds to slow trend
reduction over the data and forecasting period (no acceleration) which makes the setting similar to the Lee--Carter model.
Moreover, we choose the arbitrary normalisation $t_0=1987$.
Results for a more advanced modelling of trend reduction are shown later in Section \ref{sec:forecaDP}. Thus,~within the maximum-likelihood framework,  
we end up with $394$ parameters, with 362 to be optimised.
For~matching of moments we follow the approach given in Section \ref{sec:MM}. Risk factor variances are then estimated via Approximations~(\ref{MAPappr_lambda}) and
(\ref{MAPappr_beta}) of the maximum a posteriori approach as they
give more reliable results than matching of moments.

Based on 40,000 MCMC steps with burn-in period of 10,000 we are able to derive estimates of all parameters where
starting values are taken from matching of moments, as well as (\ref{MAPappr_lambda}) and
(\ref{MAPappr_beta}). Tuning~parameters are frequently re-evaluated in the burn-in period. The execution time of our algorithm is roughly seven hours on a standard computer in \lq R\rq. Running several parallel MCMC chains
reduces execution times to several minutes. However, note that a reduction in risk factors  (e.g.,~one or zero risk factors for mortality modelling) makes estimation much quicker.

As an illustration, Figure \ref{fig:MCMCchainReal} shows MCMC chains of the variance
of risk factor for external causes of injury and poisoning  $\sigma^2_{9}$, as well as of the parameter $\alpha_{2,f}$ for death probability intercept of females aged $55$ to $59$ years. We observe in Figure \ref{fig:MCMCchainReal}
that stationary distributions of MCMC chains for risk factor variances
are typically right skewed. This indicates risk which is associated with underestimating
variances due to limited observations of tail events.

\begin{figure}[ht]
	\centering
		\captionsetup{aboveskip=0.2\normalbaselineskip}
		\includegraphics[width=0.8\textwidth]{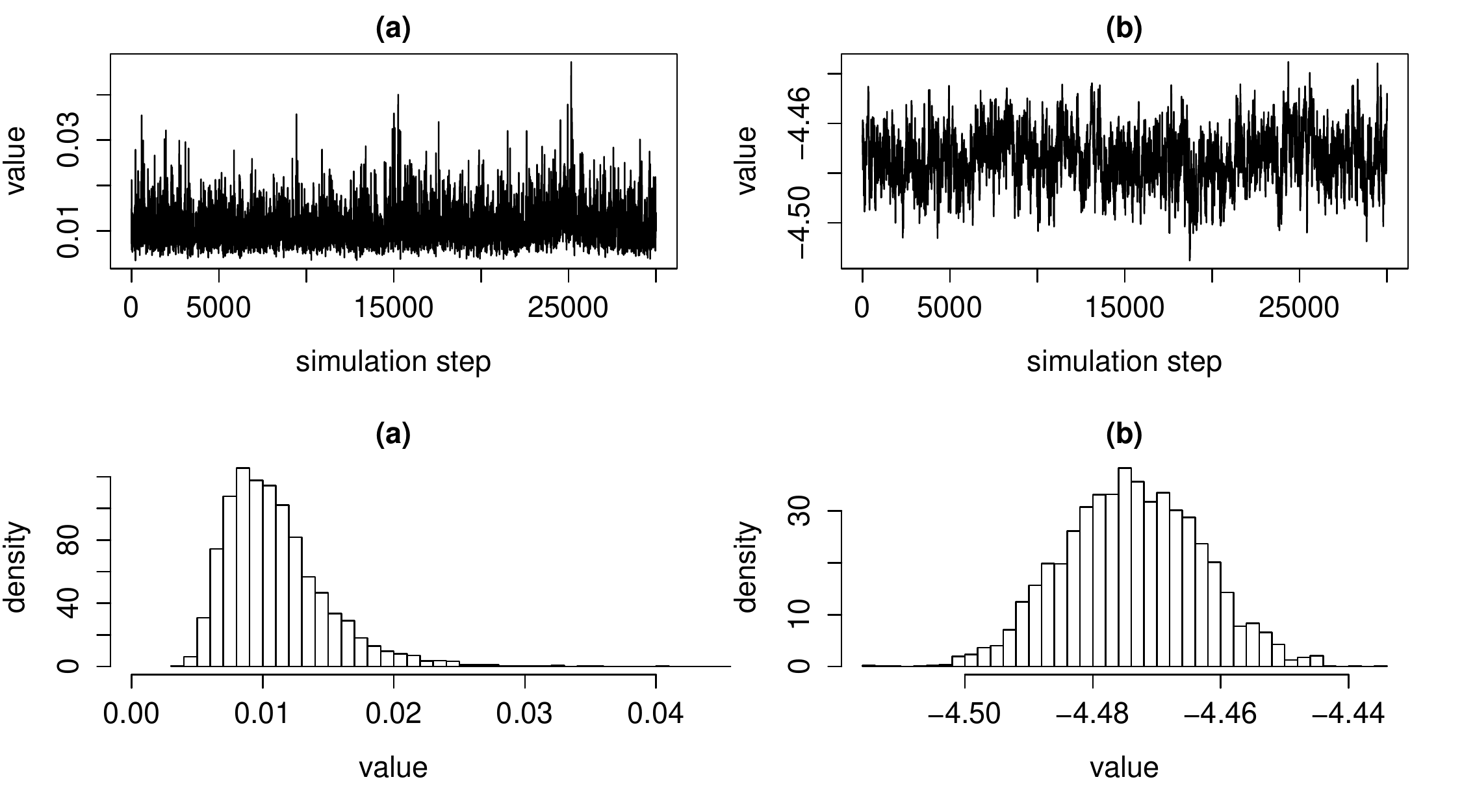}
		\caption[Example Australia: MCMC chains and density histograms for\/ $\sigma^2_{9}$ and\/ $\alpha_{2,\mathrm{f}}$.]{MCMC chains and corresponding density histograms for the variance of risk factor for deaths due to external causes of injury and poisoning\/
		$\sigma^2_{9}$ in subfigure \textbf{(a)} and for the death probability intercept parameter of females aged $55$ to $59$ years\/ $\alpha_{2,\mathrm{f}}$ in subfigure \textbf{(b)}.}
		\label{fig:MCMCchainReal}
	
\end{figure}
Table \ref{table_MCMCreal} shows estimates for risk factor standard deviations using matching of moments, Approximation (\ref{MAPappr_beta}), as well as mean estimates of MCMC
with corresponding 5\% and 95\% quantiles, as well as standard errors. First, Table \ref{table_MCMCreal} illustrates
that (\ref{MAPappr_lambda}) and (\ref{MAPappr_beta}), as well as matching of moments
estimates for risk factor standard deviations $\sigma$
are close to mean MCMC estimates. Risk factor standard deviations are small but tend to be higher
for death causes with just few deaths as statistical fluctuations in the data are higher compared to more frequent
death causes.
Solely estimates for the risk factor standard deviation of mental and behavioural disorders
give higher values. Standard errors, as defined in \cite[section 2.12.2]{pavel} with block size $50$,  for corresponding risk factor variances are consistently
less than 3\%.
We can use
the approximation given in (\ref{lambda_MAP}) to derive risk factor estimates over previous years.
For example, we observe increased risk factor realisations of diseases of the respiratory system over the years 2002 to 2004. This is mainly driven by many deaths due to influenza and pneumonia during that period.

\begin{table}[ht]
	\begin{center}\footnotesize{
		\caption[Example Australia: Estimates for risk factor standard deviations.]{Estimates for risk factor standard deviations\/ $\boldsymbol{\sigma}$ using matching of moments (MM), Approximation (\ref{MAPappr_beta}) (appr.) and MCMC mean estimates  (mean), as well as corresponding standard deviations (stdev.) and
		five and 95 percent quantiles (5\%~and 95\%).}
		\setlength{\extrarowheight}{0.5pt}
		\label{table_MCMCreal}
		\begin{tabular}{r|p{1cm}p{1cm}p{1cm}p{1cm}p{1cm}p{1cm}}
			\cline{2-7}\rule[-6pt]{0pt}{18pt}
			&MM&appr.&mean&5\%&95\%&stdev. \\ 
			\hline\rule[10pt]{0pt}{0pt}
			infectious&0.1932 & 0.0787 &	0.0812		&0.0583	&0.1063	&0.0147	\\
			neoplasms&0.0198 & 0.0148	&0.0173		&0.0100	&0.0200 &0.0029	\\
			endocrine&0.0743 & 0.0340		&0.0346 &0.0245	&0.0469 &	0.0068		\\
			mental&0.1502 & 0.1357		&0.1591 &	0.1200 &	0.2052	&	0.0265\\
			nervous &0.0756 & 0.0505		&0.0557 &	0.0412	&0.0728 &	0.0098	\\
			circulatory&0.0377 & 0.0243	&	0.0300	&	0.0224	&0.0387	 &0.0053\\
			respiratory&0.0712 & 0.0612	&	0.0670 &	0.0510	&0.0866	&	0.0110\\
			digestive&0.0921 & 0.0645		&0.0728		&0.0548	&0.0943	&0.0123\\
			external&0.1044 & 0.0912	&	0.1049 &	0.0787 &	0.1353 &	0.0176
\\\rule[-5pt]{0pt}{0pt}
			genitourinary&0.0535 & 0.0284		&0.0245		&0.0141 &	0.0346 &0.0066
\\\hline
		\end{tabular}}
	\end{center}
\end{table}

Assumption (\ref{eq:WeightFamily}) provides a joint forecast of all death cause intensities, i.e.,~weights, simultaneously---in contrast to standard procedures where projections are made for each death cause separately.
Throughout the past decades we have observed drastic shifts in crude death rates due to certain death causes over the past decades.   This fact can be be illustrated by our model as shown in
Table \ref{tab:deathCauses}. This table lists weights $w_{\textrm{a},\textrm{g},k}(t)$ for all death causes estimated for 2011, as well as forecasted for 2031 using (\ref{eq:WeightFamily}) with MCMC mean estimates for males and females aged between $80$ to $84$ years.
Model forecasts suggest that if these trends in weight changes persist, then~the future gives a whole new picture of mortality. First, deaths due to circulatory diseases are expected to decrease whilst neoplasms will become
the leading death cause over most age categories.
Moreover, deaths due to mental and behavioural disorders are expected to rise considerably for older~ages.
 High~uncertainty in forecasted weights
is reflected by wide confidence intervals (values in brackets) for the risk factor of mental and behavioural disorders. These confidence intervals are derived from corresponding MCMC chains
and, therefore, solely reflect uncertainty associated with parameter estimation. Note that results for estimated trends depend on the length of the data period as short-term trends might not coincide with mid- to long-term trends.
Further results can be found in~\cite{forecaDeathCauses2015}.

\begin{table}[ht]
	\begin{center}\footnotesize{
		\caption[Example Australia: Estimated and forecasted weights.]{Estimated weights for all death causes in years\/  2011, 2021 and\/ 
2031 using (\ref{eq:WeightFamily}) with MCMC mean estimates for ages\/ 60 to\/ 64 years (left) and\/ 80 to\/ 84 years (right) for both genders. Five and\/ 95 percent quantiles for the year\/ 2031 are given in brackets.}
		\setlength{\extrarowheight}{0.5pt}
		\label{tab:deathCauses}
		\begin{tabular}{r|rrr|rrr}
		
			\cline{2-7}\rule[-6pt]{0pt}{18pt}
			&\multicolumn{3}{ c | }{60 to 64 years}&\multicolumn{3}{ c  }{80 to 84 years}\\
			\cline{2-7}\rule[-6pt]{0pt}{18pt}
			  & 2011 & 2021 & 2031 \scriptsize{(quant.)} & 2011 & 2021 & 2031 \scriptsize{(quant.)} \\
			\cline{2-7}\rule[-6pt]{0pt}{18pt}
		& \multicolumn{6}{ c }{male}\\\hline\rule[12pt]{0pt}{0pt}
		neoplasms & 0.499& 0.531 &0.547 $\big(\substack{0.561\\ 0.531}\big)$& 0.324 &0.359& 0.378 $\big(\substack{0.392\\ 0.364}\big)$\\[0.8ex]
		circulatory & 0.228 &0.165 &0.116 $\big(\substack{0.123 \\0.109}\big)$& 0.325 &0.242 &0.173 $\big(\substack{0.181 \\0.164}\big)$\\[0.8ex]
		external& 0.056& 0.060& 0.062 $\big(\substack{0.073 \\0.053}\big)$& 0.026& 0.028 &0.028 $\big(\substack{0.033\\ 0.024}\big)$\\[0.8ex]
		respiratory& 0.051 &0.043& 0.036 $\big(\substack{0.040 \\0.032}\big)$& 0.106 &0.101 &0.092 $\big(\substack{0.101\\ 0.083}\big)$\\[0.8ex]
		endocrine&0.044 &0.053 &0.062 $\big(\substack{ 0.070 \\0.055}\big)$& 0.047& 0.062& 0.077 $\big(\substack{0.084\\ 0.070}\big)$\\[0.8ex]
		digestive &0.041& 0.039 &0.036 $\big(\substack{0.040\\ 0.031}\big)$&  0.027 &0.024 &0.020 $\big(\substack{0.023\\ 0.018}\big)$\\[0.8ex]
		nervous & 0.029 &0.040 &0.052 $\big(\substack{0.061\\ 0.045}\big)$& 0.045& 0.054 &0.061 $\big(\substack{0.068\\ 0.055}\big)$\\	[0.8ex]	
		not elsewhere (idio.) &0.018 &0.023& 0.028 $\big(\substack{0.034\\ 0.023}\big)$&0.015& 0.017 &0.018 $\big(\substack{0.020\\ 0.016}\big)$\\[0.8ex]
		infectious & 0.014 &0.019 &0.025 $\big(\substack{0.033 \\0.020}\big)$& 0.015 &0.019 &0.022 $\big(\substack{0.027\\ 0.019}\big)$\\[0.8ex]
		mental & 0.013 &0.019 &0.027 $\big(\substack{0.036 \\0.019}\big)$&  0.041 &0.068&0.105 $\big(\substack{0.130\\ 0.078}\big)$\\[0.8ex]\rule[-7pt]{0pt}{0pt}
		genitourinary & 0.008& 0.008& 0.008 $\big(\substack{0.010\\ 0.006}\big)$&  0.028& 0.027 &0.025 $\big(\substack{0.028 \\0.023}\big)$\\[0.8ex]
		\hline\rule[-6pt]{0pt}{18pt}
		& \multicolumn{6}{ c }{female}\\\hline\rule[12pt]{0pt}{0pt}
		neoplasms & 0.592& 0.628& 0.648 $\big(\substack{0.662 \\0.629}\big)$&0.263 &0.293 &0.303 $\big(\substack{0.319\\ 0.288}\big)$\\[0.8ex]
		circulatory & 0.140 &0.092& 0.060 $\big(\substack{0.065 \\0.055}\big)$&  0.342 &0.233 &0.149 $\big(\substack{0.158\\ 0.140}\big)$\\[0.8ex]
		respiratory & 0.072& 0.071 &0.069   $\big(\substack{0.078 \\0.060}\big)$&  0.100& 0.116& 0.126 $\big(\substack{0.139\\ 0.113}\big)$\\[0.8ex]
		endocrine& 0.038 &0.038 &0.037 $\big(\substack{0.043\\ 0.032}\big)$& 0.051& 0.061 &0.068 $\big(\substack{0.074\\ 0.061}\big)$\\[0.8ex]
		nervous & 0.036 &0.043& 0.051 $\big(\substack{0.060 \\0.043}\big)$&0.054& 0.068 &0.080 $\big(\substack{0.089\\ 0.071}\big)$\\[0.8ex]
		external  & 0.035 &0.033 &0.032 $\big(\substack{0.038\\ 0.026}\big)$& 0.024 &0.025& 0.023 $\big(\substack{0.027 \\0.020}\big)$\\[0.8ex]
		digestive &0.031& 0.028& 0.024 $\big(\substack{ 0.029\\ 0.020}\big)$& 0.034& 0.029 &0.023 $\big(\substack{0.027\\ 0.020}\big)$\\[0.8ex]
		not elsewhere (idio.)& 0.022 &0.023& 0.023 $\big(\substack{0.028\\ 0.019}\big)$& 0.023 &0.025 &0.024 $\big(\substack{0.027 \\0.022}\big)$\\[0.8ex]
		infectious &  0.014& 0.017 &0.020  $\big(\substack{0.027 \\0.015}\big)$&  0.017& 0.021& 0.024 $\big(\substack{0.028\\ 0.020}\big)$\\[0.8ex]
		mental  & 0.012& 0.019 &0.032 $\big(\substack{0.046\\ 0.021}\big)$& 0.062& 0.102 &0.155 $\big(\substack{0.188 \\0.118}\big)$\\[0.8ex]\rule[-7pt]{0pt}{0pt}
		genitourinary &  0.009 &0.007 &0.005 $\big(\substack{0.006 \\0.004}\big)$&0.029& 0.028 &0.026 $\big(\substack{0.028\\ 0.023}\big)$\\[0.8ex]
\hline
		\end{tabular}}
	\end{center}
\end{table}

\subsection{Forecasting Death Probabilities}\label{sec:forecaDP}

Forecasting death probabilities and central death rates within our proposed model is straight forward using
(\ref{eq:PDFamily}). In the special case with just idiosyncratic risk, i.e., $K=0$, death indicators can be assumed to be Bernoulli distributed instead of being Poisson distributed in which case we may write the likelihood function in the form
\[
\begin{split}
		\ell^{\mathrm{B}}(&\widehat{N}|\alpha,\beta,\zeta,\eta,\gamma)= \\ &
		\prod_{t=1}^T \prod_{a=0}^A\prod_{g\in\{\mathrm{f},\mathrm{m}\}} \binom {E_{a,g}(t)} {\widehat{N}_{a,g,0}(t)}
		m_{a,g}(t)^{\widehat{N}_{a,g,0}(t)} (1-m_{a,g}(t))^{E_{a,g}(t)-\widehat{N}_{a,g,0}(t)}\,,
\end{split}
\]
with $0\leq \widehat{N}_{a,g,0}(t)\leq E_{a,g}(t)$.
Due to possible overfitting, derived estimates may not be sufficiently smooth across age categories $a\in\{0,\dots,A\}$. Therefore, if we switch to a Bayesian setting, we may use regularisation via prior distributions to obtain stabler results. To guarantee smooth results and a~sufficient stochastic foundation, we suggest the usage of Gaussian priors with mean zero and a specific correlation structure,
i.e., $\pi(\alpha,\beta,\zeta,\eta,\gamma)=\pi(\alpha)\pi(\beta)\pi(\zeta)\pi(\eta)\pi(\gamma)$ with
	\begin{equation}\label{prior_TV}
		\log\pi(\alpha):=	-c_\alpha \sum_{g\in\{\mathrm{f},\mathrm{m}\}}\bigg(
			\sum_{a=0}^{A-1}  (\alpha_{a,g}-\alpha_{a+1,g})^2
			+\varepsilon_\alpha \sum_{a=0}^{A} \alpha_{a,g}^2\bigg)+\log(d_\alpha)\,,
	\end{equation}
	$c_\alpha,d_\alpha,\varepsilon_\alpha>0$, and correspondingly for $\beta$, $\zeta$, $\eta$ and $\gamma$. Parameters $c_\alpha$ (correspondingly
	for $\beta$, $\zeta$, $\eta$ and $\gamma$) is a scaling parameters and directly associated with the variance of Gaussian priors while
	normalisation-parameter $d_\alpha$ guarantees that $\pi(\alpha)$ is a proper Gaussian density.
	Penalty-parameter $\varepsilon_\alpha$ scales the correlation amongst neighbour parameters in the sense that the lower it gets, the higher the correlation. The more we increase $c_\alpha$
	the stronger the influence of,  or the believe in  the prior distribution. This particular prior density penalises
 deviations from the ordinate which is
	a mild conceptual shortcoming as this does not accurately reflect our prior believes. Setting  $\varepsilon_\alpha=0$ gives an improper
	prior with uniformly distributed (on $\mathbb{R}$) marginals such that we gain that there is no prior believe in expectations of parameters but, simultaneously, lose the presence of  variance-covariance-matrices and asymptotically get perfect positive correlation across parameters of different ages. Still, whilst lacking theoretical properties, better fits to data are obtained by setting $\varepsilon_\alpha=0$.
	For example, setting $\varepsilon_\alpha=\varepsilon_\beta=10^{-2}$ and
	$\varepsilon_\zeta=\varepsilon_\eta=\varepsilon_\gamma=10^{-4}$
	yields a  prior correlation structure which decreases with higher age differences and which is always positive as given in
	subfigure \textbf{(a)} of Figure \ref{fig:corr}.
	 \begin{figure}[ht]
	\begin{center}
		\captionsetup{aboveskip=0.2\normalbaselineskip}
		\includegraphics[width=0.88\textwidth]{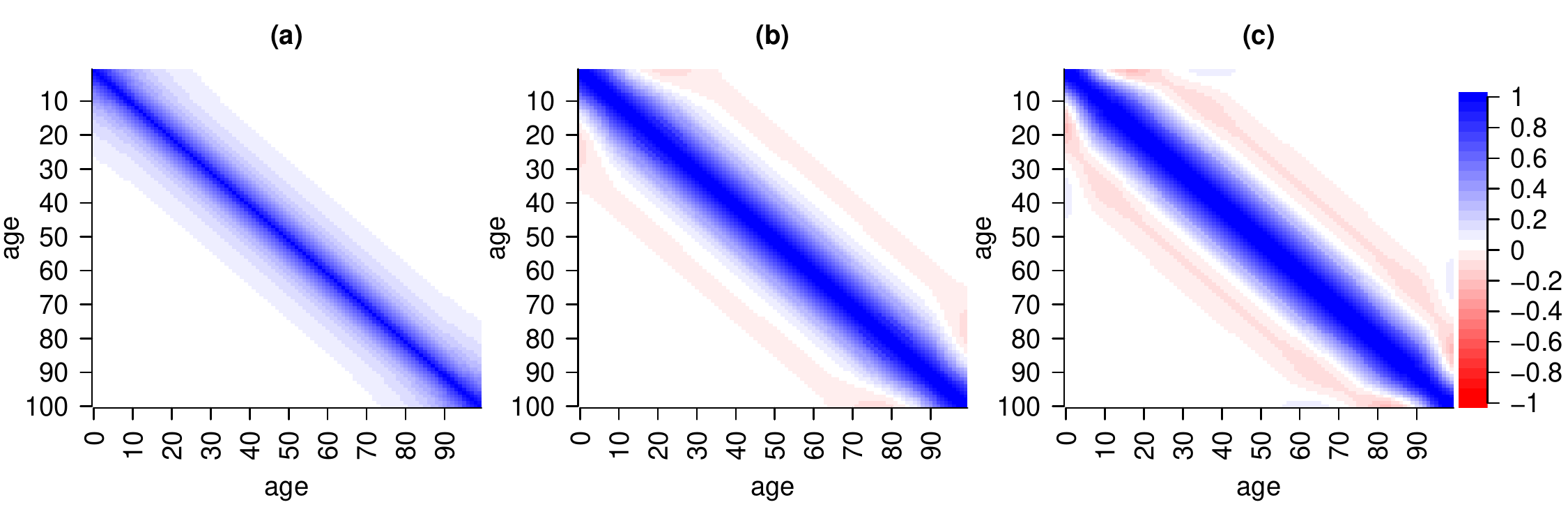}
		\caption[Correlation structure for Gaussian priors.]{Correlation structure of Gaussian priors with penalisation for deviation from ordinate with $\varepsilon = 1/100$ in subfigure \textbf{(a)}, straight line with $\varepsilon = 1/2000$ in subfigure \textbf{(b)}, and parabola $\varepsilon = 1/50000$ in subfigure \textbf{(c)}.}
		\label{fig:corr}
	\end{center}
\end{figure}
	  There exist many other reasonable choices for Gaussian prior densities. For example, replacing graduation terms
	  $(\alpha_{a,g}-\alpha_{a+1,g})^2$ in (\ref{prior_TV}) by higher order differences
		of the form $\big(\sum_{\nu=0}^k(-1)^\nu {k\choose\nu} \alpha_{a,g+\nu}\big)^2$
	 yields a penalisation for deviations from a straight line with $k=2$, see subfigure \textbf{(b)} in Figure \ref{fig:corr}, or from a parabola with $k=3$, see subfigure \textbf{(c)} in Figure \ref{fig:corr}.
	  The usage of higher order differences for graduation of statistical estimates goes back to
		the Whittaker--Henderson method. Taking $k=2,3$ unfortunately yields negative prior correlations
	  amongst certain parameters which is why we do not recommend their use. Of course, there exist many further possible
	  choices for prior distributions.
	  However, in our example, we set $\varepsilon_\alpha=\varepsilon_\beta=\varepsilon_\zeta=\varepsilon_\eta=\varepsilon_\gamma=0$ as this yields accurate results whilst still being reasonably smooth.
	
	 An optimal choice of regularisation parameters $c_\alpha,c_\beta,c_\zeta,c_\eta$ and $c_{\gamma}$ can be obtained by cross-validation.

Results for Australian data
from 1971 to 2013 with $t_0=2013$ are given in Figure \ref{fig:PDforecast3}. Using~MCMC we
derive estimates for logarithmic central death rates $\log m_{a,g}(t)$
with corresponding forecasts, mortality trends  $\beta_{a,g}$, as well as trend reduction parameters $\zeta_{a,g},\eta_{a,g}$ and cohort effects $\gamma_{a-t}$. As~we do not assume common stochastic risk factors,
the MCMC algorithm we use can be implemented very efficiently such that $40\,000$ samples from the posterior
distribution of all parameters are derived within a minute.
We observe negligible parameter uncertainty due to a long period of data. Further,
regularisation parameters obtained by cross-validation are given by  $c_\alpha=500$, $c_\beta=c_\eta=30,000 c_\alpha$, $c_\zeta=c_\alpha/20$ and $c_\gamma=1000 c_\alpha$.
\begin{figure}[ht]
	\centering
		
		\includegraphics[width=0.85\textwidth]{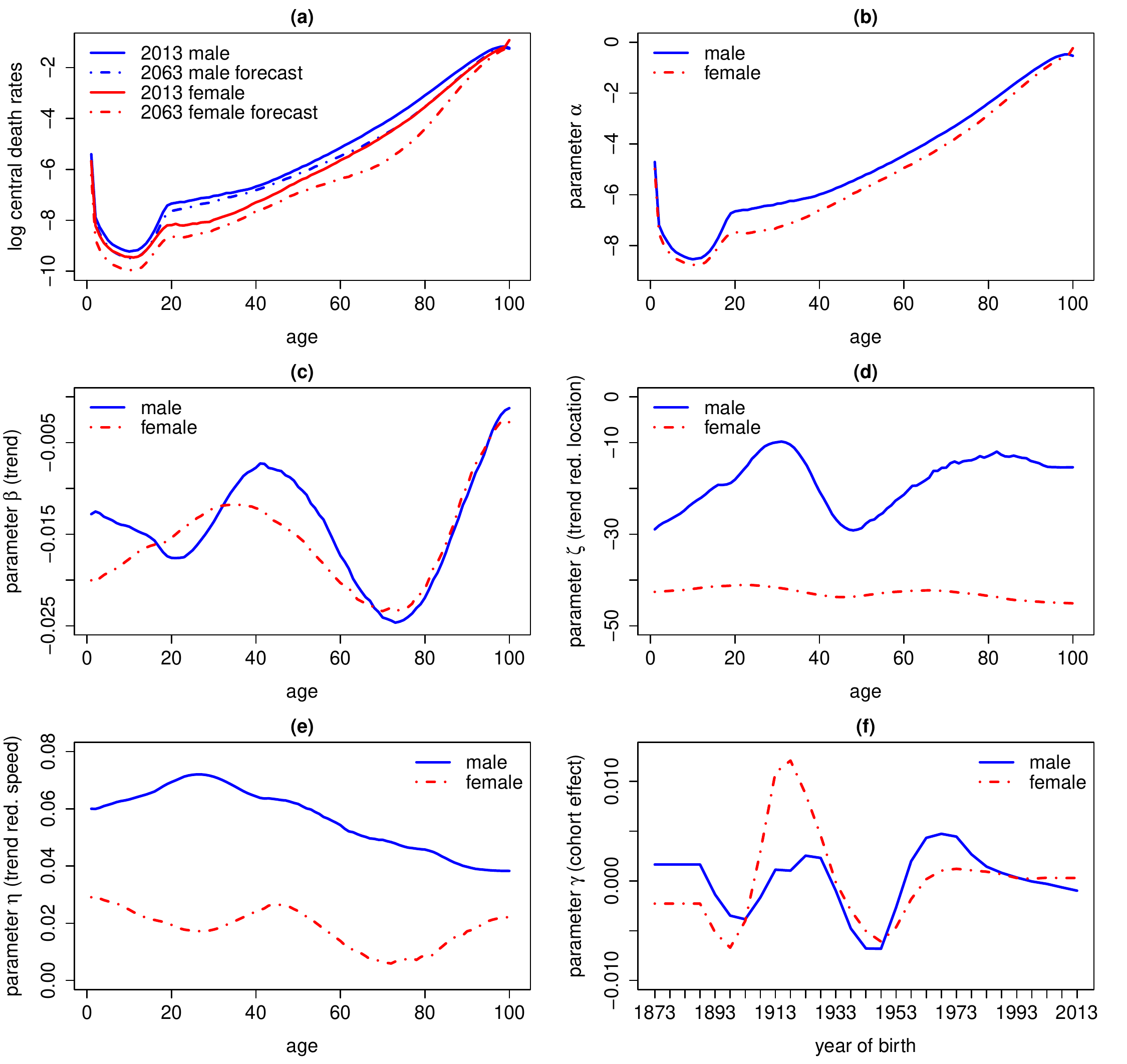}
		\caption{Logarithm of death central death rates (a) for 2013 and forecasts for 2063 in Australia as well as parameter values for $\alpha,\beta,\zeta,\eta$ and $\gamma$ in subfigures \textbf{(b)}, \textbf{(c)}, \textbf{(d)}, \textbf{(e)} and \textbf{(f)}, respectively.}
		\label{fig:PDforecast3}
	
\end{figure}
We can draw some immediate conclusions. Firstly, we see
an overall improvement in mortality over all ages where the trend is particularly strong
for young ages and ages between 60  and 80 whereas the trend vanishes towards the age of 100, maybe implying a natural barrier for life expectancy. Due~to sparse data the latter conclusion should be treated with the utmost caution.
 Furthermore, we see the classical
hump of increased mortality driven by accidents around the age of 20 which is more
developed for males.

Secondly, estimates for $\zeta_{a,g}$ suggest that trend acceleration switched to trend reduction
throughout the past 10 to 30 years for males while for females this transition already took place 45 years ago.
However, note that parameter uncertainty (under MCMC) associated with  $\zeta_{a,g}$ is high, particularly if estimates are not regularised.
 Estimates for
$\eta_{a,g}$ show that the speed of trend reduction is much stronger for males than for females.
Estimates for $\gamma_{a-t}$ show that the cohort effect is particularly strong (in the sense of increased mortality) for the generation born between 1915 and 1930 (probably
associated with World War II) and particularly weak for the generation born around 1945. However, considering cohort effects makes estimation and forecasts significantly less stable for the used data, which is why we recommend to set $\gamma_{a-t}=0$.

Based on forecasts for death probabilities, expected future life time can be estimated. To be consistent concerning longevity risk, mortality trends have to
be included as a 60-year-old today will probably not
have as good medication as a 60-year-old in several decades. However, it seems that this is not
the standard approach in the literature.
Based on the definitions above, expected (curtate) future life time of a person at date $T$ is given by
$e_{a,g}(T)=\mathbb{E}[{K_{a,g}(T)}]=\sum_{k=1}^\infty
	{}_k p_{a,g}(T)$,
where survival probabilities over $k\in\mathbb{N}$ years are given by
${}_k p_{a,g}(T):=\prod_{j=0}^{k-1} \big(1-q_{a+j,g}(T+j)\big)$
and where $K_{a,g}(T)$ denotes the number of completed future years lived by
a person of particular age and gender at time~$T$. Approximating death probabilities by central death rates, for newborns in Australia we get a life expectancy of roughly $83$ years for males and $89.5$ for females born in 2013, see Table \ref{table_futureLifeTime}. Thus,
comparing these numbers to a press release from October 2014 from the Australian Bureau of Statistics\footnote{{\href{http://www.abs.gov.au/ausstats/abs@.nsf/mediareleasesbyReleaseDate/51FD51C3FC56234DCA257EFA001AE940?OpenDocument}{http://www.abs.gov.au/ausstats/abs@.nsf/mediareleasesbyReleaseDate/51FD51C3FC56234DCA257EFA001AE940?OpenDocument}}, accessed on May 10, 2016.} saying that
\lq Aussie men now expected to live past 80\rq~and \lq improvements in expected lifespan for women has since slowed down, increasing by around four~years over the period---it's 84.3 now\rq, our results show a
much higher life expectancy due to the consideration of
mortality trends.

\begin{table}[ht]
\centering \footnotesize{
		\caption{Curtate future life time $e_{a,g}(T)$ for males and females in 2013.}
		\setlength{\extrarowheight}{0.5pt}
		\label{table_futureLifeTime}
		\begin{tabular}{r | rrrrr}
			\cline{2-6}\rule[-6pt]{0pt}{18pt}
			{age in} $2013$ & $0$ (Newborn) & $20$ & $40$ & $60$ & $80$\\
			\hline\rule[10pt]{0pt}{0pt}
			male & $83.07$ & $63.33$ & $43.62$ & $24.44$ & $8.26$\\\rule[-5pt]{0pt}{0pt}
			female & $89.45$ & $69.05$ & $48.20$ & $27.76$ & $9.88$\\	\hline
		\end{tabular}}	
\end{table}	
	
\section{A Link to the Extended CreditRisk$^+$ Model and Applications}\label{modelling}

\subsection{The ECRP Model} In this section we establish the connection from our proposed stochastic mortality model to the risk aggregation model extended CreditRisk$^+$ (abbreviated as ECRP), as given in \cite[section 6]{schmock}.
	
	\begin{definition}[Policyholders and number of deaths]\label{def2}
	Let $\{1,\dots,E\}$ with $E\in\mathbb{N}$ denote
	the set of \emph{people} (termed as policyholders in light of insurance applications) in the portfolio
	and let random variables $N_1,\dots,N_E:\Omega\rightarrow\mathbb{N}_{0}$
	indicate the \emph{number of deaths} of each policyholder in the following period.
	The event $\{N_i=0\}$ indicates survival of person $i$ whilst $\{N_i\geq1\}$ indicates death.
	\end{definition}

	\begin{definition}[Portfolio quantities]\label{def1}
	\textls[-15]{Given Definition \ref{def2}, the independent random vectors
	${Y}_1,\dots,{Y}_E:\Omega\rightarrow\mathbb{N}_0^d$ with $d\geq 1$ dimensions
	denote \emph{portfolio quantities}
	 within the following period given deaths of policyholders,}
	i.e., on $\{N_i\geq 1\}$ for all $i\in\{1,\dots,E\}$, and
	are independent of ${N}_1,\dots,{N}_E$.
	\end{definition}
	
	\begin{remark}\label{Rem:model}(Portfolio quantities).
		\begin{enumerate}

			\item[(a)] For applications in the context of internal models we may set $Y_i$ as the best
			estimate liability, i.e.,~discounted~future cash flows, of policyholder $i$ at the end of the period.
			Thus, when using stochastic discount factors or contracts with optionality, for example,
			portfolio quantities may be stochastic.
			
			\item[(b)] In the context of portfolio payment analysis we may set $Y_i$ as the payments
			(such as annuities) to $i$ over the next period.
			We may include premiums in a second dimension in order to get joint distributions of premiums and payments.
			\item[(c)] For applications in the context of mortality estimation and projection we set $Y_i = 1$.
				
		\item[(d)] Using discretisation which preserves expectations (termed as stochastic rounding in \cite[section~6.2.2]{schmock},
				we may assume ${Y}_i$ to be $[ 0,\infty)^d$-valued .
		\end{enumerate}
	\end{remark}
	
	\begin{definition}[Aggregated portfolio quantities]\label{defS}
		Given Definitions \ref{def2} and \ref{def1}, aggregated portfolio quantities due to deaths are given by
		\[
      			S:=\sum_{i=1}^{E}\sum_{j=1}^{N_{i}}{Y}_{i,j}\,,
		\]
		where $({Y}_{i,j})_{j\in \mathbb{N}}$ for every $i\in\{1,\dots,E\}$
		is an i.i.d.~sequence of random variables with the same distributions as~${Y}_{i}$.
	\end{definition}

	\begin{remark}\label{rem:TotalLoss} In the context of term life insurance contracts, for example,
			${S}$ is the sum of best estimates of payments and premiums which
			are paid and received, respectively, due to deaths of policyholders, see Section \ref{sec:SII}. In the context of annuities,
			${S}$ is the sum of best estimates of payments and premiums which
			need not be paid and are received, respectively, due to deaths of policyholders. Then, small values of ${S}$, i.e., the left tail of its distribution,
			is the part of major interest and major risk.
	\end{remark}

	It is a demanding question how to choose
	the modelling setup such that the
	distribution of $S$ can be derived efficiently and accurately. Assuming $N_i$ to be Bernoulli distributed
	is not suitable for our intended applications as computational complexity explodes.
	Therefore, to make the modelling setup applicable in practical situations and to ensure a flexible handling in terms of
	multi-level dependence, we introduce the ECRP model which is
	based on  extended CreditRisk$^+$, see  \cite[section 6]{schmock}.
	
	\begin{definition}[The ECRP model]\label{simple_model} Given Definitions \ref{def2} and \ref{def1},
		the ECRP model satisfies the following additional assumptions:
		\begin{enumerate}
			 \item[(a)] \label{rf_property} Consider independent random common risk factors $\Lambda_{1},\dots,\Lambda_{K}:\Omega\rightarrow[ 0,\infty)$ which have a gamma distribution with mean $e_k=1$ and
				variance $\sigma_k^2> 0$,  i.e., with shape and
		         	inverse scale parameter~$\sigma^{-2}_k$. Also the degenerate case with
				$\sigma^2_k= 0$ for $k\in\{1,\dots,K\}$ is allowed.
				Corresponding weights \mbox{$w_{i,0},\dots,w_{i,K}\in[ 0,1]$} for every policyholder $i\in\{1,\dots,E\}$. Risk index zero represents idiosyncratic risk and we require
	$w_{i,0}+\dots +w_{i,K}=1$.
	
       			\item[(b)]\label{n0_property} Deaths $N_{1,0},\dots,N_{E,0}:\Omega\rightarrow\mathbb{N}_0$ are independent
			from one another, as well as all other
				random variables and, for all $i\in\{1,\dots,E\}$, they are Poisson distributed with
				intensity $m_i w_{i,0}$, i.e.,
				\[
					\mathbb{P}\bigg(\bigcap_{i=1}^E \{N_{i,0}=\widehat{N}_{i,0}\}\bigg)=\prod_{i=1}^E e^{-m_i w_{i,0}}\frac{(m_i w_{i,0})^{\widehat{N}_{i,0}}}{\widehat{N}_{i,0}!}\,,\quad \widehat{N}_{1,0},\dots,\widehat{N}_{E,0}\in\mathbb{N}_0\,.
				\]
		         \item[(c)]\label{conditional_property} Given  risk factors, deaths $(N_{i,k})_{i\in\{1,\dots,E\},k\in\{1,\dots,K\}}:\Omega\rightarrow\mathbb{N}_0^{E\times K}$ are independent and,
			 	for every policyholder $i\in\{1,\dots,E\}$ and $k\in\{1,\dots,K\}$, they
				are Poisson distributed with random intensity $m_{i} w_{i,k} \Lambda_{k}$, i.e.,
				\[
					\mathbb{P}\bigg(\bigcap_{i=1}^E\bigcap_{k=1}^K \{N_{i,k}=\widehat{N}_{i,k}\}\,
					\bigg|\,\Lambda_1,\dots,\Lambda_K\bigg)=
					\prod_{i=1}^E\prod_{k=1}^K e^{-m_{i} w_{i,k} \Lambda_{k}}
					\frac{(m_{i} w_{i,k} \Lambda_{k})^{\widehat{N}_{i,k}}}{\widehat{N}_{i,k}!}\quad\textrm{a.s.,}
		                	\]
				for all $n_{i,k}\in\mathbb{N}_0$.
			\item[(d)]\label{sum_property}  For every policyholder $i\in\{1,\dots,E\}$,
			the total number of deaths $N_i$ is split up additively according
			to risk factors as
			$N_i=N_{i,0}+\dots+N_{i,K}$.
			Thus, by model construction, $\mathbb{E}[{N_i}]=m_i (w_{i,0}+\dots+w_{i,K})=m_i$.
		\end{enumerate}
	\end{definition}
	
	Given Definition \ref{def2}, central death rates are given by $m_i=\mathbb{E}[{N_i}]$ and death probabilities, under~piecewise constant death rates, are given by $q_i=1-\exp(-m_i)$.
	
	\begin{remark}
		Assuming that central death rates and weights are equal
		for all policyholders for the same age and gender, it is
		obvious that the ECRP corresponds one-to-one to our proposed stochastic mortality model, as given in Definition \ref{def:additiveModel}, if risk factors are independent gamma distributed.
	\end{remark}
	
	In reality, number of deaths are Bernoulli random variables as each person can just die once.
	Unfortunately in practice, such an approach is
	not tractable for calculating P\&L distributions of large portfolios as execution times
	explode if numerical errors should be small.
	Instead, we will assume the number of deaths of each
	policyholder to be compound Poisson distributed.
	However, for estimation of life tables we will assume the number of deaths to be Bernoulli distributed. 	
	Poisson distributed deaths give an efficient way for calculating P\&L distributions
	using an algorithm based on Panjer's recursion, also for
	large portfolios, see \cite[section 6.7]{schmock}.
	The algorithm is basically due to~\cite{Giese} for which~\cite{Haaf}
	proved numerical stability. The relation to Panjer's recursion was first pointed out in
	~\cite[section 5.5]{panjer}.
	 \cite{schmock} in section 5.1 generalised the algorithm to the multivariate case with dependent risk factors
	and risk groups, based on the multivariate extension of Panjer's algorithm
	given by \cite{sundt}.
	The algorithm is numerically stable since just positive terms are added up.
	To avoid long execution times for implementations of extended CreditRisk$^+$ with large annuity
	portfolios, greater loss units and stochastic rounding, \mbox{see \cite[section 6.2.2]{schmock}}, can~be~used.
	
	However, the proposed model allows for multiple (Poisson) deaths of each policyholder and thus approximates the \lq real world\rq~with single (Bernoulli) deaths.
	From a theoretical point of view, this~is
	justified by the Poisson approximation and generalisations of it, see for example \cite{poisson_approx}.
	Since annual death probabilities for ages up to 85 are less than 10\%,
	multiple deaths are relatively unlikely for all major ages.
	However, implementations of
			this algorithm are significantly faster than Monte Carlo approximations for comparable error (Poisson against Bernoulli) levels.

			As an illustration we take a portfolio with $E=10,000$ policyholders having central death rate $m:= m_i = 0.05$ and payments $Y_i=1$.
			We then derive the distribution of $S$ using the ECRP model for the case with just idiosyncratic risk, i.e., $w_{i,0}=1$ and Poisson distributed deaths, and for the case with just one common stochastic risk factor $\Lambda_1$ with variance $\sigma_1=0.1$ and no idiosyncratic risk, i.e., $w_{i,1}=1$ with mixed Poisson distributed deaths.
			Then, using $50,000$ simulations of the corresponding model where $N_i$ is
			Bernoulli distributed or mixed Bernoulli distributed given truncated risk factor $\Lambda_1| \Lambda_1\leq \frac{1}{m}$, we compare the results of the ECRP model to Monte Carlo, respectively. Truncation of risk factors in the Bernoulli model is necessary as otherwise death probabilities may exceed one.
			We observe that the ECRP model drastically reduces execution times in \lq R\rq~at comparable error levels and leads to a~speed up by the factor of 1000. Error levels
			in the purely idiosyncratic case are
			measured in terms of total variation distance between approximations and
			the binomial distribution with parameters $(10,000,0.05)$ which arises as
			the independent sum of all Bernoulli random variables.
			Error levels in the purely non-idiosyncratic case are
			measured in terms of total variation distance between approximations and
			the mixed binomial distribution where for the ECRP model we use Poisson approximation to get an upper bound.
			the total variation between those distributions is $0.0159$ in our simulation and, thus,
			dominates the Poisson approximation in terms of total variation.
			Results are summarised in Table \ref{table_est}.
		
		\begin{table}[ht]
	\begin{center}\footnotesize{
		\caption{uantiles, execution times (speed) and total variation distance (accuracy) of Monte Carlo with Bernoulli deaths and $50,000$ simulations, as well as the \emph{extended CreditRisk$^+$} (ECRP) model with Poisson deaths, given~a~simple~portfolio.} 
		\label{table_est}
		\begin{tabular}{r|rrrrr|rr}
		\cline{2-6}\rule[-6pt]{0pt}{18pt}
			& \multicolumn{5}{c|}{quantiles}  & &  \\ 
			\cline{2-8}\rule[-6pt]{0pt}{18pt}
			& 1\%  & 10\% & 50\%& 90\%  & 99\%  & speed & accuracy \\ 
			\hline\rule[10pt]{0pt}{0pt}
			Bernoulli (MC), $w_{i,0}=1$ & $450$ & $472$ & $500$ & $528$ &  $552$ & $22.99$ sec. & $0.0187$\\\rule[-5pt]{0pt}{0pt}
			Poisson (ECRP), $w_{i,0}=1$ & $449$ & $471$ & $500$ & $529$ & $553$ & $0.01$ sec. & $0.0125$\\
			\hline\rule[10pt]{0pt}{0pt}
			Bernoulli (MC), $w_{i,1}=1$ & $202$ & $310$ & $483$ & $711$ & $936$ & $23.07$ sec. & $0.0489$\\\rule[-5pt]{0pt}{0pt}
			Poisson (ECRP), $w_{i,1}=1$ & $204$ & $309$ & $483$ & $712$ & $944$ & $0.02$ sec. &$ \leq 0.0500$\\\hline
		\end{tabular}}
	\end{center}
\end{table}

\subsection{Application I: Mortality Risk, Longevity Risk and Solvency II Application}\label{sec:SII}
In light of the previous section, life tables can be projected into the future and, thus, it is straightforward to derive best estimate  liabilities (BEL) of annuities and life insurance contracts. The possibility that death probabilities differ from an expected curve, i.e.,~estimated parameters do no longer reflect the best estimate and have to  be changed, contributes to mortality or longevity risk, when risk is measured over a one year time horizon as in Solvency II and the duration of in-force insurance contracts exceeds this time horizon. In our model, this risk can be captured by considering various MCMC samples $(\hat{\theta}^h)_{h= 1,\dots, m}$ (indexed by superscript $h$) of parameters $\theta=(\alpha,\beta,\zeta,\eta,\gamma)$ for death probabilities, yielding distributions of BELs. For example, taking $D(T,T+t)$ as the discount curve from time $T+t$ back to $T$ and choosing an MCMC sample $\hat{\theta}^h$ of parameters to calculate death probabilities
$q^h_{a,g}(T)$ and survival probabilities ${p}^h_{a,g}(T)$ at age $a$ with gender $g$, the BEL for a term life insurance contract which pays $1$ unit at the end of the year of death within the contract term of $d$ years is given by
\begin{equation}\label{BEL_life_insurance}
	A_{a,g}^T\big(\hat{\theta}^h\big)=D(T,T+1) q^h_{a,g}(T)+\sum_{t=1}^d D(T,T+t+1)\cdot {}_{t} {p}^h_{a,g}(T)  q^h_{a+t,g}(T+t)\,.
\end{equation}

In a next step, this approach can be used as a building block for (partial) internal models to calculate basic solvency capital requirements (BSCR) for biometric underwriting risk under Solvency~II, as~illustrated in the following example.

Consider an insurance portfolio at time $0$ with $E\in\mathbb{N}$ whole life insurance policies with lump sum payments $C_i>0$, for $i=1,\dots,E$, upon death at the end of the year. Assume that all assets are invested in an EU government bond (risk free under the standard model of the Solvency II directive) with maturity $1$, nominal $A_0$  and coupon rate $c>-1$. Furthermore, assume that we are only considering mortality risk and ignore profit sharing, lapse, costs, reinsurance, deferred taxes, other assets and other liabilities, as well as the risk~margin. Note~that in this case, basic own funds, denoted by $\mathrm{BOF}_t$, are given by market value of assets minus BEL at time $t$, respectively. Then, the BSCR at time $0$ is given by the 99.5\% quantile of the change in basic own funds over the period $[0,1]$, denoted~by $\Delta\mathrm{BOF}_1$, which~can be derived by, see (\ref{BEL_life_insurance}),
\begin{equation}\label{BOF}
\begin{array}{l}
	\begin{aligned}
	\Delta\mathrm{BOF}_1&=\mathrm{BOF}_0- D(0,1)\mathrm{BOF}_1=
		A_0\big(1- D(0,1)(1+c)\big)-\sum_{i=1}^E C_i A_{a,g}^{0}\big(\hat{\theta}\big)\\
		&+\frac{D(0,1)}{m}\sum_{h=1}^m \bigg(\sum_{i=1}^E C_i  A_{a+1,g}^{1}\big(\hat{\theta}^h\big)+\sum_{i=1}^E\sum_{j=1}^{N_i^h}C_i \big(1-A_{a+1,g}^{1}\big(\hat{\theta}^h\big)\big)\bigg)\,.
	\end{aligned}
	\end{array}
\end{equation}
where $\hat \theta:=\frac{1}{m}\sum_{h=1}^m \hat{\theta}^h$ and where $N^h_1,\dots,N^h_E$ are independent and Poisson distributed with $\mathbb{E}[{N_i^h}]=q_{a_i,g_i}^h(0)$ with policyholder $i$ belonging to age group $a_i$ and of gender $g_i$. The distribution of the last sum above can be derived efficiently by Panjer recursion. This example does not require a consideration of market risk and it nicely illustrates how mortality risk splits into a part associated with statistical fluctuation (experience variance: Panjer recursion) and into a part with long-term impact (change in assumptions: MCMC). Note that by mixing $N_i$ with common stochastic risk factors, we may include other biometric risks such as morbidity.

Consider a portfolio with 100  males and females at each age between 20 and 60 years, each having a 40-year term life insurance, issued in 2014, which provides a lump sum payment between 10,000 and 200,000 (randomly chosen for each policyholder) if death occurs within these 40 years. Using~MCMC samples and estimates based on the Austrian data
from 1965 to 2014 as given in the previous section, we may derive the change in basic own funds from 2014 to 2015 by (\ref{BOF}) using the extended CreditRisk$^+$ algorithm. The 99.5\% quantile of change in BOFs, i.e.,~the SCR, is lying slightly above one million. If~we did not consider parameter risk in the form of MCMC samples, the SCR would decrease by roughly 33\%.

\subsection{Application II: Impact of Certain Health Scenarios in Portfolios}
Analysis of certain health scenarios and their impact on portfolio P\&L distributions is straightforward
As an explanatory example, assume $m=1600$ policyholders which distribute uniformly over all age categories and genders, i.e., each age
category contains 100 policyholders with corresponding death probabilities, as well as weights as previously estimated and
forecasted for 2012.
Annuities $Y_i$ for all $i\in\{1,\dots,E\}$ are paid annually and take deterministic values in $\{11,\dots,20\}$ such that ten policyholders in
each age and gender category share equally high payments.
We now analyse the effect on the total amount of annuity payments in the next period under the scenario,
indexed by \lq scen\rq , that deaths due to neoplasms are reduced by 25\% in 2012 over all ages. In that case, we can estimate the realisation of risk factor
for neoplasms, see (\ref{lambda_MAP}), which takes an estimated value of $0.7991$.
Running the ECRP model with this risk factor realisation being fixed, we end up
with a loss distribution $L^\mathrm{scen}$ where deaths due to neoplasms have decreased.
Figure \ref{fig:scen} then shows probability distributions of traditional loss $L$
without scenario, as well as of scenario loss  $L^{\mathrm{neo}}$ with corresponding 95\% and 99\%
quantiles.
We observe that a reduction of 25\% in cancer crude death rates leads to a
remarkable shift in quantiles of the loss distribution as fewer people die and, thus,
more annuity payments have to be made.
\begin{figure}[ht]
	\centering
		\includegraphics[width=0.62\textwidth,trim=0 0cm 0 0cm,clip]{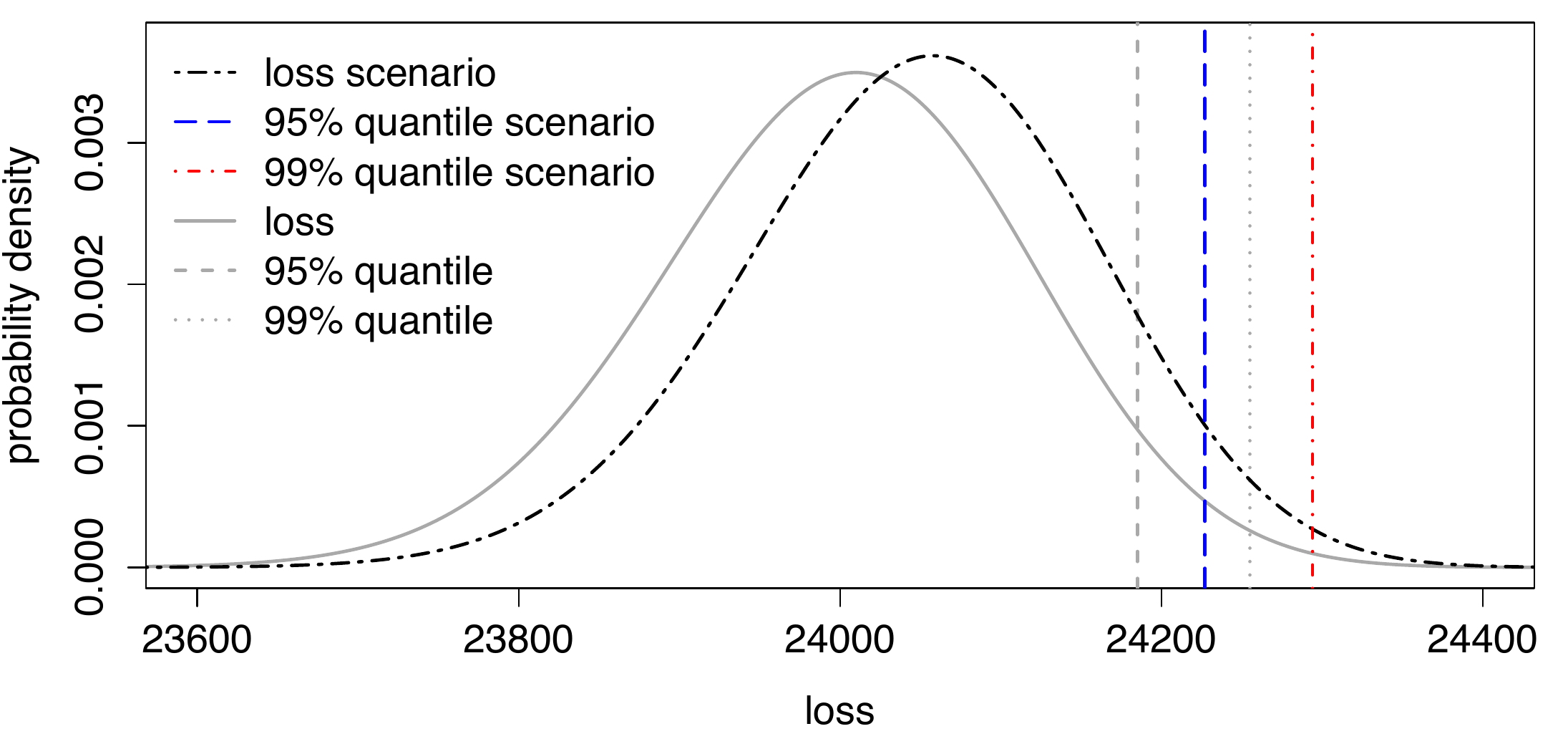}
		\caption[Example Australia: Loss distribution for simple portfolio with scenario.]{Loss distributions of\/ $L$ and\/ $L^{\mathrm{scen}}$ with\/ 95 and\/ 99\% quantiles.}
		\label{fig:scen}
	
\end{figure}

\subsection{Application III: Forecasting Central Death Rates and Comparison With the Lee--Carter Model}\label{sec:LC}
We  can compare out-of-sample
forecasts of death rates from our proposed model to forecasts obtained by other mortality models. Here, we choose the traditional Lee--Carter model as a proxy as our proposed model is conceptionally based on a similar idea.
We make the simplifying assumption of a constant population for out-of-sample time points.

Using the ECRP model it is straight-forward to {forecast central death rates}
and to give corresponding confidence intervals via setting $Y_j(t):=1$.
Then, for an estimate $\hat{{\theta}}$ of parameter vector
${\theta}$ run
the ECRP model with parameters
forecasted, see (\ref{eq:PDFamily}) and (\ref{eq:WeightFamily}).
We then obtain the distribution of the total number
of deaths $S_{a,g}(t)$ given $\hat{{\theta}}$ and, thus,
forecasted death rate $\widehat{m}_{a,g}(t)$ is given by
$\mathbb{P}\big(\widehat{m}_{a,g}(t)=N/{E_{a,g}(T)}\big)
	=\mathbb{P}(S_{a,g}(t)=N)$, for all $N\in\mathbb{N}_0$.
	
Uncertainty in the form of confidence intervals represent
statistical fluctuations, as well as random changes in risk factors. Additionally, using results obtained
by Markov chain Monte Carlo (MCMC) it is even possible to
incorporate parameter uncertainty into predictions.
To account for	 an increase in uncertainty for forecasts we suggest to assume increasing risk factor variances for forecasts, e.g., $\tilde\sigma^2_k(t)=\sigma^2_k (1+d(t-T))^2$ with $d\geq 0$.
A motivation for this approach with $k=1$ is the following: A~major source of uncertainty for forecasts lies in an
unexpected deviation from the estimated trend for death probabilities. We may therefore assume that rather than being deterministic, forecasted
values $m_{a,g}(t)$ are beta distributed (now denoted by $M_{a,g}(t)$) with $\mathbb{E}[{M_{a,g}(t)}]=m_{a,g}(t)$ and
variance $\sigma^2_{a,g}(t)$ which is increasing in time. Then, given independence amongst risk factor $\Lambda_1$ and
$M_{a,g}(t)$, we may assume that there exists a future point in time $t_0$ such that
\[
	\sigma^2_{a,g}(t_0)=\frac{m_{a,g}(t_0) (1- m_{a,g}(t_0))}{\sigma^{-2}_1+1}\,.
\]

In that case, $M_{a,g}(t_0) \Lambda_1$ is again gamma distributed with mean one and increased variance
$m_{a,g}(t_0)\sigma^2_1$ (instead of $m^2_{a,g}(t_0)\sigma^2_1$ for the deterministic case).
Henceforth, it seems reasonable to stay within the family of gamma distributions for forecasts and just adapt
variances over time. Of course, families for these variances for gamma distributions can be changed arbitrarily
and may be selected via classical information criteria.

Using	in-sample data, $d$ can be estimated via (\ref{MLE_likelihood}) with all other parameters being fixed. Using~Australian death and population data for the years 1963 to 1997 we estimate model parameters via MCMC in the ECRP model with one common stochastic risk factor having constant weight one. In~average, i.e., for various forecasting periods and starting at different dates, parameter $d$ takes the value $0.22$ in our example.
Using fixed trend parameters as above, and using the mean of 30,000 MCMC samples, we forecast death rates and corresponding confidence intervals out of sample
for the period 1998 to 2013. We can then compare these results to crude death rates within the stated period and
to forecasts obtained by the Lee--Carter model which is shown in Figure \ref{fig:PDforecast} for females aged $50$ to $54$~years.
We observe that crude death rates mostly fall in the 90\% confidence band for both procedures.
Moreover, Lee--Carter forecasts lead to wider spreads of quantiles in the future whilst the ECRP model suggests a more moderate increase in uncertainty. Taking various parameter samples from the MCMC chain and deriving quantiles for death rates, we can extract contributions of parameter
uncertainty in the ECRP model coming from posterior distributions of parameters.
\begin{figure}[ht]
	\centering
		\captionsetup{aboveskip=0.2\normalbaselineskip}
		\includegraphics[width=0.65\textwidth]{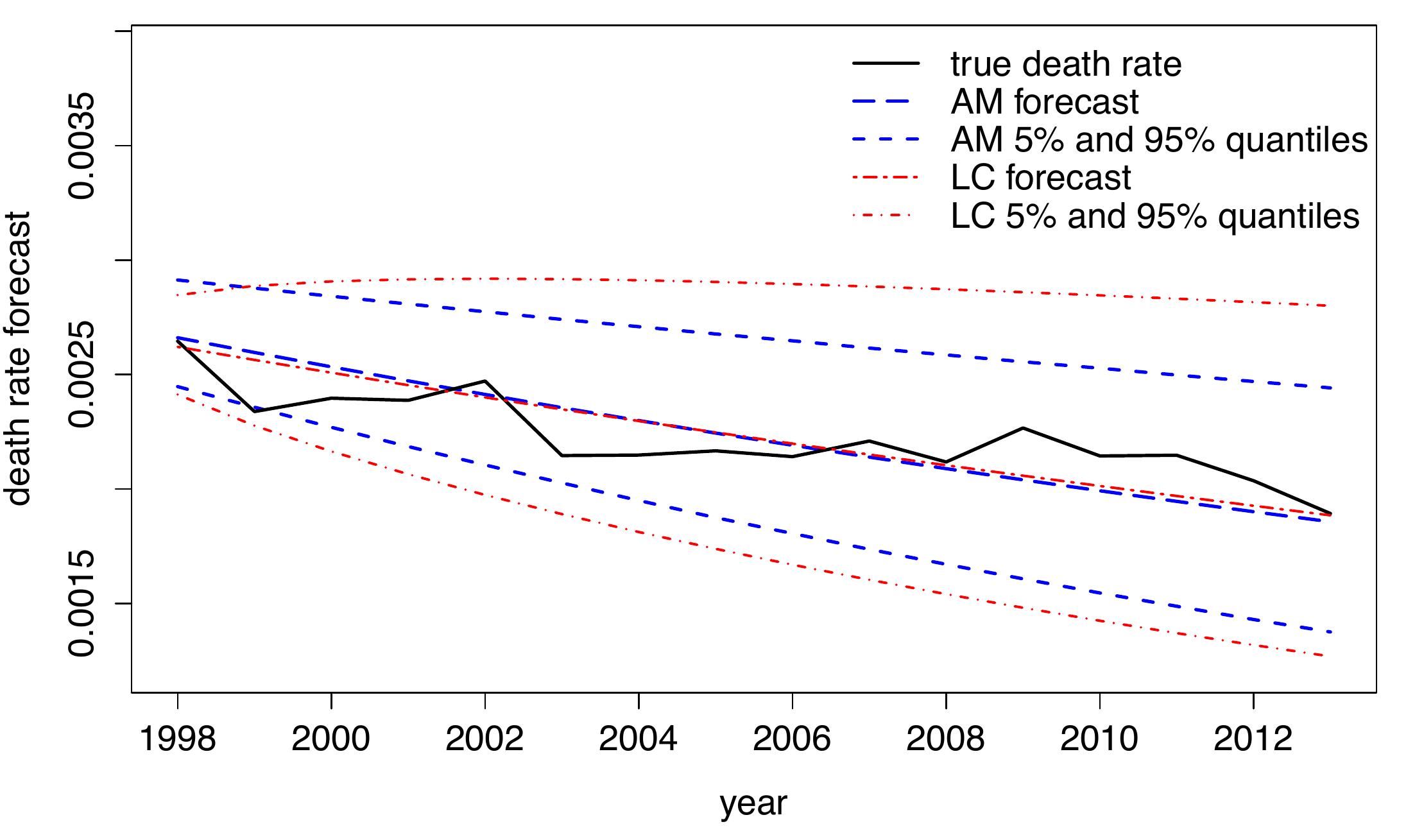}
		\caption[Example Australia: True versus forecasted death rates 2002--2011.]{Forecasted and true death rates using the ECRP model (AM) and the Lee--Carter model (LC) for females aged $50$ to $54$~years.}
		\label{fig:PDforecast}
	
\end{figure}

Within our approach to forecast death rates, it is now possible to derive contributions of various sources of risk. If we set $\delta=0$ we get forecasts where uncertainty solely comes from statistical fluctuations and random changes in risk factors. Using
$\delta = 0.22$ this adds the uncertainty increase associated with uncertainty for forecasts.
Finally, considering several MCMC samples this adds parameter risk. We observe that the contribution of statistical fluctuations and random changes in risk factors decreases from 63\% in 1998 to 20\% in 2013. Adding the increase in uncertainty for forecasts gives a roughly constant contribution of 72\% which implies that $\delta$ becomes the main driver of risk in the long term. On top of that, parameter uncertainty leaves a constant contribution of 28\%.

\subsection{Considered Risks}

Regulators often require {security margins} in life tables when modelling annuity or certain life insurance products and
portfolios to account
for different {sources of risk}, including trends, volatility~risk, model risk and parameter risk, ~\cite{AVOE_annuity_table} as well as \cite{DAV}.

In the ECRP model, {mortality trends}  are incorporated via families for death probabilities which are motivated by
the Lee--Carter model. It is straight forward to arbitrarily change parameter
families such that it fits the data as in the case when trends change fundamentally.
If other families for weights are used, one always has to check that they sum up to one over all death causes.
Note that for certain alternative parameter families, mean estimates obtained from Markov chain
Monte Carlo do not necessarily sum up to one anymore.
Changing model parameter families may also be necessary when using long-term
projections since long-term trends
are fundamentally different from short-term trends. Further estimation and testing procedures for trends in composite Poisson models in the context of
convertible bonds can be found in \cite{schmockConvBonds}.
Trends for weights are particularly interesting insofar as the model becomes
sensitive to the change in the vulnerability of policyholders to different death causes over time. Cross dependencies
over different death causes and different ages can~occur. \mbox{Such an effect}
can arise as a reduction in deaths of a particular cause can
lead to more deaths in another cause, several periods later, as people have to die at some
point. Furthermore, the~ECRP model captures unexpected, temporary deviations from a trend  with the variability introduced by common stochastic risk factors which effect all policyholders
according to weights simultaneously.

Assuming that the model choice is right and that estimated values are correct, life tables still
just give mean values of death probabilities over a whole population. Therefore,
in the case of German data it is suggested to add a non gender specific
due to legal reasons and it is set to 7.4\% to account for the risk of random fluctuations in deaths, approximately at a $95$\% quantile, see German Actuarial Association (DAV) \cite[section 4.1]{DAV2}.
In the ECRP model this risk is captured automatically by risk aggregation.
As a reference to the suggested security margin of $7.4$\% on death probabilities, we can use the same approach as given in
Section \ref{sec:LC} to estimate quantiles for death rates via
setting $Y_j=1$. These~quantiles then
correspond to statistical fluctuations around death probabilities.
We roughly
observe an~average deviation from death probability of 8.4\% for the 5\% quantile
and of 8.7\% for the 95\% quantile
of females aged 55 to 60 years in 2002, i.e., these values are in line with
a security margin of 7.4\%.

The risk of wrong parameter estimates, i.e.,~that realised death probabilities deviate from estimated values,
can be captured using MCMC
as described in Section \ref{sec:MCMC} where we sample from the joint posterior distributions of the estimators.
As our proposed extended CreditRisk$^+$ algorithm is numerically very efficient, we can easily run the ECRP model for several thousand samples from
the MCMC chain to derive
sensitivities of quantiles, for example. Note that parameter risk is closely linked to longevity risk. To cover the risk of fundamental changes in future death
probabilities, Section~\ref{sec:LC} provides an approach where future risk factor variances increase over time.

Modelling is usually a projection of a sophisticated real world problem on a relatively simple subspace which cannot cover all
facets and observations in the data.
Therefore, when applying the ECRP model to a portfolio of policyholders, we usually find {structural differences} to the
data which is used for estimation. There may also be a difference in mortality rates between individual companies or between portfolios within a company since
different types of insurance products attract different types of policyholders with a different individual risk profile.
In Germany, for these risks a minimal
security margin of ten\% is suggested, see \cite[section 2.4.2]{DAV}. Within the ECRP model, this~risk
can be addressed by using individual portfolio data instead of the whole population.
Estimates from the whole population or a different portfolio can be used
as prior distributions under MCMC which, in case of sparse data, makes estimation more stable. Another~possibility for introducing dependency amongst two portfolios is the introduction of a joint stochastic risk factor for both portfolios. In that case, estimation can be performed jointly with all remaining (except risk factors and their variances) parameters being individually estimated for both portfolios.
In contrast to the whole population, observed mortality rates
in insurance portfolios often show a completely different structure due to self-selection of policyholders. In particular,
for ages around 60, this effect is very~strong. In~Germany, a security margin for death probabilities of 15\% is suggested to cover
selection effects, see DAV \cite[section 4.2]{DAV2}.
In the literature, this~effect is also referred to as basis risk, \cite{basisRisk}. As already mentioned, instead of using a fixed security margin, this~issue can be tackled by using portfolio data with estimates from the whole population serving as prior distribution. Again, dependence amongst a portfolio and the whole population can be introduced by a~joint stochastic risk factor in the ECRP model.

Alternatively, in~\cite[section 4.7.1]{AVOE_annuity_table} it is suggested that all these risks are addressed by
adding a constant security margin on the trend.
This approach has the great conceptional advantage that the security margin is increasing over time and does
not diminish as in the case of direct security margins on death probabilities.

\subsection{Generalised and Alternative Models}\label{alt_models}
Up to now, we applied a simplified version of extended CreditRisk$^+$ to derive cumulative payments in annuity
portfolios. A major shortcoming in this approach is the limited possibility of modelling dependencies amongst policyholders and  death causes. In the most general form of extended CreditRisk$^+$  as described in \cite[section 6]{schmock},
it is possible to introduce
risk groups which enable us to model joint deaths of several policyholders and it is possible to model
dependencies amongst death causes.  Dependencies can take a linear
dependence structure combined with dependence scenarios to model negative correlations as well. Risk factors may then be identified with statistical variates
such as average blood pressure, average physical activity or the average of smoked cigarettes, etc.,
and not directly with death causes.
Moreover, for each policyholder individually,
the general model allows for losses  which depend on the underlying cause of death.
This gives scope to the possibility of modelling---possibly new---life insurance products with payoffs depending on the cause of death as, for example, in
the case of accidental death benefits. Including all extensions mentioned above, a similar algorithm may still be applied to derive loss distributions,  see~\cite[section 6.7]{schmock}.

Instead of using extended CreditRisk$^+$ to model
annuity portfolios, i.e., an approach based on Poisson mixtures, we can assume a similar {Bernoulli mixture model}.  In such a Bernoulli mixture
model, conditionally Poisson distributed deaths are simply replaced by conditionally Bernoulli distributed~deaths.
In general, explicit and efficient derivation of loss distributions in the case of Bernoulli mixture models is not possible anymore.
Thus, in this case,
one has to rely on other methods such as Monte Carlo.
Estimation of model parameters works similarly as discussed in Section \ref{sec:estimation}. Poisson~approximation suggests that loss distributions
derived from Bernoulli and Poisson mixture models are similar in terms of total variation distance if death probabilities are small.

\section{Model Validation and Model Selection}\label{validation}

In this section we propose several validation techniques in order to check whether the ECRP model fits the given data or not. Results for Australian data, see Section \ref{sec:real}, strongly suggest that the proposed model is suitable.
If any of the following validation approaches suggested misspecification in the model or if parameter estimation did not
seem to be accurate, one possibility to tackle these problems would be to reduce risk factors.

\subsection{Validation via Cross-Covariance}\label{sec:validationCross}
For the first procedure, we transform deaths $N_{a,g,k}(t)$ to $N'_{a,g,k}(t)$, see Section \ref{sec:MM}, such that this sequence has constant expectation and can thus be assumed to be i.i.d. Then, sample variances of transformed death counts, cumulated across age and gender groups, can be compared to MCMC confidence bounds from the model.
In the death-cause-example all observed sample variances of $N_{k}(t)$ lie within 5\%- and 95\%-quantiles.

\subsection{Validation via Independence}\label{sec:validation_ind}

Number of deaths for different
death causes are independent within the ECRP model as independent risk factors are assumed.
Thus, for all $a,a'\in\{1,\dots,A\}$ and $g,g'\in\{\mathrm{f},\mathrm{m}\}$, as well as
$k,k'\in\{0,\dots,K\}$ with $k\neq k'$ and $t\in U$, we have
$\text{Cov}(N_{a,g,k}(t), N_{a',g',k'}(t))=0$.
Again, transform the data as above and subsequently normalise the transformed data, given $\text{Var}\big(N'_{a,g,k}(t)|\Lambda_k(t)\big)>0$ a.s.,
as follows:
\[
	N^*_{a,g,k}(t)
	:=\frac{N'_{a,g,k}(t)-\mathbb{E}[{N'_{a,g,k}(t)}|{\Lambda_k(t)}]}
	{\sqrt{ \text{Var}({N'_{a,g,k}(t)}|{\Lambda_k(t)})}}
	=\frac{N'_{a,g,k}(t)-E_{a,g}m_{a,g} w_{a,g,k}\Lambda_k(t)}
	{\sqrt{ E_{a,g}m_{a,g} w_{a,g,k}\Lambda_k(t)}}\,.
\]

Using the conditional
central limit theorem as in \cite{CCLT}, we~have $N^*_{a,g,k}(t)\rightarrow N(0,1)$ in distribution as $E_{a,g}(t)\rightarrow\infty$
where $N(0,1)$ denotes the standard normal distribution. Thus,
normalised death counts $n^*_{a,g,k}(t)$ are given by
\[
	n^*_{a,g,k}(t)
	=\frac{n'_{a,g,k}(t)-E_{a,g}\hat{m}_{a,g} \hat{w}_{a,g,k}\hat{\lambda}_k(t)}
	{\sqrt{ E_{a,g}\hat{m}_{a,g} \hat{w}_{a,g,k}\hat{\lambda}_k(t)}}\,.
\]
with $\hat{\lambda}_0(t):=1$. Then, assuming that each pair $(N^*_{a,g,k}(t),N^*_{a',g',k'}(t))$,
for $a,a'\in\{1,\dots,A\}$ and  $g,g'\in\{\mathrm{f},\mathrm{m}\}$, as well as
$k,k'\in\{0,\dots,K\}$ with $k\neq k'$ and $t\in U$,
has a joint normal distribution with some correlation coefficient $\rho$ and standard normal marginals, we may
derive the sample correlation coefficient
\[
	R_{a,g,a',g',k,k'}:=
		\frac{\sum_{t=1}^T (N^*_{a,g,k}(t)- \overline{N}^*_{a,g,k})
	 	(N^*_{a',g',k'}(t)- \overline{N}^*_{a',g',k'})}
		{\sqrt{\sum_{t=1}^T (N^*_{a,g,k}(t)- \overline{N}^*_{a,g,k})^2
	 	\sum_{t=1}^T (N^*_{a',g',k'}(t)- \overline{N}^*_{a',g',k'})^2}}\,,
\]
 where $ \overline{N}^*_{a,g,k}:=\frac{1}{T}\sum_{s=1}^T N^*_{a,g,k}(s)$.
Then, the test of the null hypothesis $\rho=0$  against the alternative hypothesis $\rho\neq 0$ rejects the null hypothesis
at an $\delta$-percent level, see \cite[section 5.13]{statistic}, when
\begin{equation}\label{ind}
	\frac{|R_{a,g,a',g',k,k'}|}{\sqrt{(1-R_{a,g,a',g',k,k'}^2)/
	(T-2)}}
	>K_{\delta,T}\,,
\end{equation}
with $K_{\delta,T}$ such that $\int_{K_{\delta,T}}^\infty
t_{T-2}(y)\, dy=\delta/2$ where $t_{T-2}$ denotes the density
of a $t$-distribution with $(T-2)$ degrees of freedom.

Applying this validation procedure on Australian data with ten death causes shows that 88.9\%
	of all independence tests, see (\ref{ind}), are accepted at a 5\% significance level.
	Thus, we may assume that the ECRP model fits the data suitably with respect to independence amongst
	death counts due to different~causes.

\subsection{Validation via Serial Correlation}\label{sec:serial}

Using the same data transformation and normalisation as in Section \ref{sec:validation_ind}, we may assume that random variables $(N^*_{a,g,k}(t))_{t\in U}$ are identically and
standard normally distributed.
Then, we can check for serial dependence and autocorrelation in the data such as causalities between a reduction in deaths due to certain death causes and a possibly lagged increase
in different ones. Note that we already remove a lot of dependence via time-dependent weights and
death probabilities. Such serial effects are, for example, visible in the case of mental and
behavioural disorders and circulatory diseases.

Many tests are available most of which assume an autoregressive model with normal errors such as
the Breusch--Godfrey test, see~\cite{breusch}. For the Breusch--Godfrey test a linear model is fitted to the data where the residuals
are assumed to follow an autoregressive process of length $p\in\mathbb{N}$. Then,  $(T-p)R^2$ asymptotically
follows a $\chi^2$ distribution with $p$ degrees of freedom under the null hypothesis that there is no autocorrelation. In \lq R\rq, an implementation of the Breusch--Godfrey is available
within the function \textsf{bgtest} in the \lq lmtest\rq~package, see \cite{lmtest}.

Applying this validation procedure to Australian data given in Section \ref{sec:real},
	the null hypothesis, i.e., that there is no serial correlation of order $1,2,\dots, 10$, is not rejected at a 5\% level
	 in 93.8\% of all cases.
	Again, this is an indicator that the ECRP model with trends for weights and death probabilities fits the data suitably

\subsection{Validation via Risk Factor Realisations}

In the ECRP model, risk factors ${\Lambda}$ are assumed to be independent and identically gamma
distributed with mean one and variance $\sigma^2_k$. Based on these assumptions,
we can use estimates for risk factor realisations
${\lambda}$ to judge whether the ECRP model adequately fits the data.
These estimates can either be obtained via MCMC based on the maximum a posteriori setting or by
Equations (\ref{lambda_MAP}) or~(\ref{MAPappr_lambda}).

For each $k\in\{1,\dots,K\}$, we may check whether estimates
$\hat{\lambda}_k(1),\dots,\hat{\lambda}_k(T)$ suggest a rejection of the null hypothesis that
they are sampled from a gamma distribution with mean one and variance $\sigma^2_k$.
The classical way is to use the Kolmogorov--Smirnov test, see e.g. \cite[section~6.13]{statistic} and the references therein.
In \lq R\rq~an implementation of this test is provided by the \textsf{ks.test} function, see \cite{stats}. The null hypotheses is rejected as soon as the test statistic
$\sup_{x\in\mathbb{R}}|F_T(x)-F(x)|$ exceeds the corresponding critical value where
$F_T$ denotes the empirical distribution function of samples $\hat{\lambda}_k(1),\dots,\hat{\lambda}_k(T)$ and where  $F$ denotes the gamma distribution function with mean one and variance~$\sigma^2_k$.

	Testing whether risk factor realisations are sampled from a gamma distribution via
	the Kolmogorov--Smirnov test as described above gives acceptance of
	the null hypothesis for all ten risk factors on all suitable levels of significance.

\subsection{Model Selection}

For choosing a suitable family for mortality trends, information criteria such as AIC, BIC, or~DIC can be applied straight away. The decision how many risk factors to use cannot be answered by traditional information criteria since a reduction in risk factors leads to a different data structure. It~also depends on the ultimate goal. For example, if the development of all death causes is of interest, then a reduction of risk factors is not wanted. On the contrary, in the context of annuity portfolios several risk factors may be merged to one risk factor as their contributions to the risk of the total portfolio are~small.

\section[Conclusion]{Conclusions}
 We introduce an additive stochastic mortality model which is closely related to classical approaches such as the Lee--Carter model but allows for joint modelling of underlying death causes and improves models using disaggregated death cause forecasts. Model parameters can be jointly estimated using MCMC based on publicly available data. We give a link to extended CreditRisk$^+$ which provides a useful actuarial tool with numerous portfolio applications such as P\&L derivation in annuity and life insurance portfolios or (partial) internal model applications. Yet, there exists a fast and numerically stable algorithm to derive loss distributions exactly, instead of Monte Carlo, even for large portfolios. Our proposed model directly incorporates various sources of risk including trends,
			longevity, mortality risk, statistical volatility and
		estimation risk. In particular, it is possible to quantify the risk of statistical fluctuations within the next
period (experience variance) and parameter uncertainty over a longer time horizon (change in assumptions). Compared to the Lee--Carter model, we have a more flexible framework and can directly extract several sources of uncertainty. Straightforward model validation techniques are available.

\vspace{6pt}

\bibliographystyle{amsplain}

\renewcommand\bibname{References}

\end{document}